\title{Deterministic Minimum Steiner Cut in Maximum Flow Time}
\author{Matthew Ding\\University of California, Berkeley\\Email: \tt{matthewding@berkeley.edu}
\and
Jason Li\\Carnegie Mellon University\footnote{This work was done while visiting the Simons Institute for the Theory of Computing.}\\Email: \tt{jmli@cs.cmu.edu}}
\date{\today}
\begin{document}
\maketitle
\begin{abstract}
    We devise a deterministic algorithm for minimum Steiner cut, which uses $(\log n)^{O(1)}$ maximum flow calls and additional near-linear time. This algorithm improves on Li and Panigrahi's (FOCS 2020) algorithm, which uses $(\log n)^{O(1/\epsilon^4)}$ maximum flow calls and additional $O(m^{1+\epsilon})$ time, for $\epsilon > 0$. Our algorithm thus shows that deterministic minimum Steiner cut can be solved in maximum flow time up to polylogarithmic factors, given any black-box deterministic maximum flow algorithm. Our main technical contribution is a novel deterministic graph decomposition method for terminal vertices that generalizes all existing $s$-strong partitioning methods, which we believe may have future applications.
\end{abstract}

\section{Introduction}
The minimum cut (or ``min-cut'') of a weighted graph is the smallest weighted subset of edges whose deletion disconnects the graph. The problem of finding the minimum cut is one of the most fundamental problems in combinatorial optimization and theoretical computer science as a whole. It also has important applications such as network optimization \cite{AHUJA19951} and image segmentation \cite{BPHN_2019}. Thus, finding faster algorithms for this problem will have far-reaching applications for a wide variety of fields. Recently, there has been a large amount of groundbreaking work in the field, including deterministic almost-linear time\footnote{Near-linear algorithms have runtime $\Tilde{O}(m)$ and almost-linear algorithms have runtime $m^{1+o(1)}$. We use $\tilde{O}(\cdot)$ to hide polylogarithmic factors.} algorithms for both minimum cut \cite{li2021deterministic} and maximum flow \cite{determinstic_max_flow}.  

\subsection{Minimum Steiner Cut Background}

A classic extension of the min-cut problem is the minimum Steiner cut (or ``Steiner min-cut'') problem. In this problem, we are given an undirected, weighted graph $G = (V,E)$ and a subset $T \subseteq V$ of terminals. A Steiner cut is a subset of edges whose removal disconnects at least one pair of terminals in the graph. The minimum Steiner cut is the Steiner cut with the minimum total weight of cut edges. This problem generalizes both $s-t$ minimum cut ($T = \{s,t\}$) and global minimum cut ($T=V$) and is therefore a fundamental problem in graph algorithms.

The classical algorithm to solve minimum Steiner cut uses $|T| - 1$ max-flow computations. Li and Panigrahi \cite{steiner-polylog-flows} give a randomized algorithm which reduces minimum Steiner cut in near-linear time to just polylogarithmic number of max-flow computations. They additionally provide a deterministic algorithm which takes, for any parameter $\epsilon>0$, $(\log n)^{O(1/\epsilon^4)}$ max-flow calls with $O(m^{1+\epsilon})$ additional running time. Given the currently known fastest deterministic maximum flow algorithm in almost-linear time \cite{determinstic_max_flow}, the two results combined give an almost-linear time algorithm for global minimum cut, which matches the algorithm of Li \cite{li2021deterministic}. We remark that a very recent work~\cite{near_linear_mincut} has improved the running time of deterministic \emph{global} minimum cut to near-linear, i.e., \ $\tilde O(m)$. However, since minimum Steiner cut is at least as hard as $s-t$ minimum cut, traditionally solved through $s-t$ max-flow, a near-linear time minimum Steiner cut algorithm remains elusive without an equally fast max-flow algorithm.

\subsection{Our Contributions}
We show that a deterministic, near-linear time max-flow algorithm is the \emph{only} obstacle towards obtaining a deterministic, near-linear time minimum Steiner cut algorithm. More precisely, we introduce a new deterministic algorithm that finds the minimum Steiner cut in polylogarithmic $s-t$ max flow calls and near-linear additional processing time.
\begin{theorem}\label{theorem:main_result}
    Given an undirected, weighted graph $G=(V,E)$ with $n$ vertices and $m$ edges, polynomially bounded edge weights, and a set of terminal vertices $T \subseteq V$, there is a deterministic minimum Steiner cut algorithm that makes $\text{polylog}(n)$ maximum flow calls on undirected, weighted graphs with $O(n)$ vertices and $O(m)$ edges, and runs in $\Tilde{O}(m)$ time outside of these maximum flow calls.
\end{theorem}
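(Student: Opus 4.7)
The plan is to fit this into the isolating-cuts framework of Li--Panigrahi~\cite{steiner-polylog-flows}, where one reduces minimum Steiner cut to $O(\log n)$ rounds of isolating-cut computations on carefully chosen subsets of the terminal set $T$. Recall that an isolating-cut computation on a subset $S \subseteq T$ returns, for each $v \in S$, the minimum cut separating $v$ from $S \setminus \{v\}$ using $O(\log |S|)$ max-flow calls. If the minimum Steiner cut separates $T$ into parts $T_A$ and $T_B$ with $|T_A| \le |T_B|$, then for the algorithm to succeed it suffices that some $S$ in our collection satisfies $|S \cap T_A| = 1$ and $|S \cap T_B| \ge 1$. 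Hence the whole problem reduces to deterministically producing, in $\tilde O(m)$ time, a polylogarithmic-size family of subsets of $T$ that is guaranteed to contain such an $S$ for the (unknown) minimizer.

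My first step is to handle the ``unbalanced'' regime where $|T_A|$ is at most polylog: here it is enough to enumerate all subsets of $T$ of sizes $1,2,4,\ldots$ obtained from a deterministic hitting family (e.g.\ expander-based or splitter-based constructions), which gives polylog rounds of isolating cuts and $\tilde O(m)$ overhead. The hard and novel part is the ``balanced'' regime, where both $|T_A|$ and $|T_B|$ are large. Here random sampling trivially picks out a single terminal from $T_A$ with nontrivial probability, but derandomizing this is the crux. The plan is to build the new \emph{terminal-aware $s$-strong partition} promised in the abstract: a decomposition of $V$ into clusters such that (i) each cluster has small ``terminal volume'' relative to $s$, and (ii) any cut of value at most $s$ crosses few clusters. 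Property (ii) guarantees that the side $T_A$ of the minimum Steiner cut is essentially a union of clusters (up to a small boundary), while property (i) ensures that one can then use a standard deterministic splitter on the cluster indices to produce, in polylog many subsets of $T$, a set $S$ satisfying the isolation condition above.

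The key step, and what I expect to be the main technical obstacle, is the near-linear time deterministic construction of this terminal-aware strong partition. Existing $s$-strong partitions of Kawarabayashi--Thorup and its successors are built around vertex degrees and expander decompositions on the whole graph; in our setting the relevant notion of ``weight'' is concentrated only on $T$, which can be arbitrarily sparse inside $V$. The plan is to run a deterministic expander decomposition on a suitable weighted ``terminal graph'' obtained by contracting or reweighting non-terminal structure, then trim boundaries using deterministic expander pruning so that any small cut crosses only polylog clusters. Care is needed to ensure the guarantee ``any cut of value $\le s$ respects the partition up to polylog crossings'' holds with respect to $T$-based volumes rather than the usual degree-based volumes, which is precisely the generalization of $s$-strong partitioning mentioned in the abstract.

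Finally, I would wrap the full algorithm around the standard Li--Panigrahi outer loop: guess the value $s$ of the minimum Steiner cut up to a factor of $2$ via $O(\log n)$ trials, run the unbalanced-case enumeration and the balanced-case decomposition at scale $s$, perform the resulting polylog many isolating-cut computations, and return the minimum cut found. Each isolating-cut round calls $O(\log n)$ max-flows on graphs of size $O(m)$, and the decomposition and bookkeeping are $\tilde O(m)$, yielding the claimed bounds of Theorem~\ref{theorem:main_result}. Correctness reduces to verifying that for the true minimizer $(T_A,T_B)$, at least one constructed subset $S$ isolates a single terminal in $T_A$, which is exactly what the two regimes above are designed to guarantee.
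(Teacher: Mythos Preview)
Your proposal has the right outer skeleton (guess the scale, split into an unbalanced and a balanced regime, and lean on isolating cuts), but there are two genuine gaps relative to what the paper actually does.

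\textbf{Balanced case: the isolation argument does not close.} In the balanced regime both $|T_A|$ and $|T_B|$ exceed $\text{polylog}(n)$. Your plan is to use the terminal decomposition plus ``a standard deterministic splitter on the cluster indices'' to produce some $S$ with $|S\cap T_A|=1$. But nothing in the decomposition forces $T_A$ to touch only polylog-many clusters; the guarantee is only that at most $1/\gamma=\text{polylog}(n)$ clusters are \emph{cut} (have terminals on both sides), while $T_A$ can own arbitrarily many pure clusters. Splitters isolate an element from a \emph{small} unknown set, so no polylog-size family of subsets can guarantee $|S\cap T_A|=1$ when $|T_A|$ is large. The paper avoids this entirely: instead of trying to isolate a single terminal, it uses the decomposition to \emph{sparsify} the terminal set, producing $U'\subset U$ with $|U'|\le|U|/2$ that still meets both sides of the minimizer (one or $s{+}1$ representatives per cluster, depending on cluster size), and then recurses. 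After $O(\log n)$ halvings one is forced into the unbalanced regime. This iterative terminal-sparsification step is the missing idea in your plan.

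\textbf{Building the terminal decomposition in $\tilde O(m)$.} You propose to obtain the terminal-aware partition via ``a deterministic expander decomposition on a suitable weighted terminal graph'' plus pruning. The obstacle is exactly the one the paper is written to circumvent: deterministic expander decomposition is currently only $m^{1+o(1)}$, not $\tilde O(m)$, so this route reproduces the Li--Panigrahi bottleneck rather than removing it. The paper's actual construction is different in kind: it runs a cut--matching game whose cut player uses a \emph{non-terminal} $(s,\delta,\gamma)$-strong decomposition (available deterministically in $\tilde O(m)$ from~\cite{near_linear_mincut}) and whose matching player calls max-flow on $G$; termination in $O(\log|T|)$ rounds, together with a trimming step, yields an $(s,\delta,\gamma,T)$-terminal-strong decomposition. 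Crucially, the max-flow calls are what replace the expander-decomposition subroutine, so the decomposition itself costs polylog max-flows plus $\tilde O(m)$---this is where the $\text{polylog}(n)$ max-flow budget in Theorem~\ref{theorem:main_result} is spent, not only in the isolating-cut rounds.
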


Specifically, a hypothetical \textbf{deterministic near-linear time algorithm for $s-t$ max-flow also implies a deterministic near-linear time algorithm for minimum Steiner cut}. This result was not known from the work of \cite{steiner-polylog-flows}, given the additional $m^{1+\epsilon}$ running time in their deterministic algorithm. In fact, for the case of polylogarithmic maximum flow calls ($\epsilon=\Omega(1)$), their algorithm has runtime $m^{1+\Omega(1)}$, even slower than almost-linear. Our algorithm also matches the running time of the randomized minimum Steiner cut algorithm from \cite{steiner-polylog-flows} up to polylogarithmic factors.

\paragraph*{Terminal-Based Partitioning Methods.}
Expander decompositions have been a powerful tool for solving minimum cut problems in recent years \cite{li2021deterministic,steiner-polylog-flows}. However, the current state-of-the-art deterministic expander decomposition takes almost-linear time \cite{almost_linear_expander}, and it is an open problem whether this can be improved.

A key tool for deterministic near-linear time algorithms is the decomposition into $s$-strong clusters used by recent minimum cut algorithms on \emph{simple} graphs \cite{HRW20, edge-connectivity}. However, recent work \cite{near_linear_mincut} has applied the concept of $s$-strong clusters to weighted graphs to obtain a deterministic near-linear global minimum cut algorithm. Our main technical contribution is to extend this decomposition framework on general weighted graphs and apply it towards a decomposition that \emph{specifically partitions clusters of terminals with a boundary proportional to the size of the terminal set}. We state this result informally below.
\begin{theorem}[Informal]\label{theorem:terminal_decomposition}
    Given an undirected, weighted graph $G=(V,E)$ with $n$ vertices and $m$ edges, polynomially bounded edge weights, a set of terminal vertices $T \subseteq V$, sparsity parameter $0 < \psi < 1$, and cut size parameter $\delta > 0$, there is a deterministic algorithm which returns a vertex partitioning of clusters $V_1, V_2,...,V_\ell$ such that the following hold:
    \begin{enumerate}
        \item For every cluster, any cut with weight less than $\delta$ splits the cluster with at most $\textup{polylog(n)}$ terminals on at least one side. 
        \item For every cluster, any cut with weight less than $\delta$ either does not split the cluster or has at least $\geq \delta/\textup{polylog(n)}$ weight of cut edges inside the cluster.
        \item The total weight of edges between clusters is at most $\Tilde{O}(\delta\cdot|T|)$.
    \end{enumerate}
     
    The algorithm makes $O(\log^2 n)$ maximum flow calls on undirected, weighted graphs with $O(n)$ vertices and $O(m)$ edges and runs in $\Tilde{O}(m)$ time outside of these maximum flow calls.
\end{theorem}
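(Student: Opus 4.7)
The plan is to extend the $s$-strong partitioning framework of \cite{near_linear_mincut, HRW20, edge-connectivity} so that the resulting clusters respect the terminal set $T$. Starting from the trivial partition $\{V\}$, I iteratively refine it: while some cluster $U$ admits a cut of weight less than $\delta$ that violates property (1) or property (2), I split $U$ along that cut and recurse. When no violation remains, the refinement is returned as $V_1,\ldots,V_\ell$.

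To locate violating cuts using only a few max-flow calls, I would adapt the isolating-cuts lemma of Li--Panigrahi \cite{steiner-polylog-flows}. Using $O(\log n)$ balanced bipartitions of the terminal set and one $s$-$t$ max-flow per bipartition (on contracted graphs), one obtains an approximate minimum terminal-separating cut for every terminal simultaneously. Any such cut of weight less than $\delta$ is inspected: if it leaves at least $\textup{polylog}(n)$ terminals on each side of the current cluster (violation of (1)), or if less than $\delta/\textup{polylog}(n)$ of its weight lies inside the cluster (violation of (2)), I perform the split. Since the recursion depth is $O(\log n)$ (each split sends a polylog number of terminals to each side or is charged to internal weight) and each level uses $O(\log n)$ max-flow calls, the total budget is $O(\log^2 n)$ max-flow calls and $\Tilde{O}(m)$ additional time.

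The boundary bound (3) follows from a charging argument. Define the potential $\Phi(\mathcal{P}) = \sum_i |T \cap V_i| \log |T \cap V_i|$. A split caused by a violation of (1) adds at most $\delta$ to the inter-cluster weight while decreasing $\Phi$ by $\Omega(\textup{polylog}(n))$, so these splits contribute at most $\Tilde{O}(\delta \cdot |T|)$ to the total boundary. A split caused by a violation of (2) transfers at most a $1/\textup{polylog}(n)$ fraction of its weight to the inter-cluster boundary, with the remaining weight charged internally to the cluster's interior; summing over splits these add only a polylogarithmic overhead on top of the first charge.

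The hard part will be ensuring that \emph{every} violating cut is witnessed by the enriched isolating-cuts procedure without resorting to an expander decomposition, whose best deterministic construction is only almost-linear \cite{almost_linear_expander} and would blow up the running time. What I expect to need is a deterministic ``most-balanced terminal cut'' subroutine: given a cluster $U$ and the promise that some cut of weight less than $\delta$ has at least $k$ terminals on each side, the subroutine returns such a cut in polylogarithmically many max-flow calls. Combined with a local trimming procedure applied to each cluster to enforce property (2), and the isolating-cuts engine to enforce property (1), this should deliver the claimed deterministic algorithm within the stated max-flow budget and $\Tilde{O}(m)$ additional time.
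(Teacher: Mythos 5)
Your proposal identifies the correct high-level goal (a terminal-aware analogue of $s$-strong partitioning) and even correctly flags where the difficulty lies. However, the two places you gesture at — the ``most-balanced terminal cut'' subroutine and the $O(\log n)$ recursion depth — are precisely where the argument has genuine gaps, and the paper resolves both with a mechanism you do not supply.

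\paragraph{Gap 1: No certified refinement step.}
Your scheme is ``split along a violating cut until none remains,'' but you never give a procedure that either finds a violating cut or \emph{certifies} that none exists. Isolating cuts give you, roughly, a minimum cut isolating each terminal; they do not give you a balanced terminal-sparse cut (or a certificate of its absence), and you acknowledge you would need such a primitive. The paper does not attempt to find violating cuts directly. Instead it runs a cut-matching game on a small auxiliary graph $H$ whose vertex set is $T$. The cut player runs a (non-terminal) $(s,\delta,\gamma)$-strong decomposition on $H$ — a problem solvable deterministically in near-linear time via the weighted strong-partition machinery of \cite{near_linear_mincut} — and the matching player translates a bipartition of $H$ into one max-flow in $G$. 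Flow-cut duality then yields \emph{either} a balanced terminal-sparse cut in $G$, \emph{or} a fractional matching that grows $H$ toward a certificate. After $O(\log|T|)$ rounds the algorithm is guaranteed to terminate with either a balanced cut or a terminal-strong certificate. This is the missing engine: the game simultaneously supplies the ``find a cut'' and ``certify strong'' directions.

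\paragraph{Gap 2: Recursion depth is not $O(\log n)$ without a balance guarantee.}
A violation of property (1) only guarantees $\geq\textup{polylog}(n)$ terminals on \emph{both} sides; it does not guarantee anything close to balance. If a violating cut repeatedly peels off exactly $\textup{polylog}(n)+1$ terminals, your refinement has depth $\Omega(|T|/\textup{polylog}(n))$, which breaks both the $O(\log^2 n)$ max-flow budget and the potential-function charging for property (3). The paper avoids this because the cut-matching game only ever returns cuts with $\geq|T|/6$ terminals on each side, and in the unbalanced case it instead runs a \emph{trimming} procedure (a single additional max-flow) that certifies the large side as a terminal-strong cluster and recurses only on the small side (which has $\leq 2|T|/3$ terminals). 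Both branches shrink the terminal count geometrically, giving $O(\log n)$ depth. Your ``local trimming procedure applied to enforce property (2)'' is in the right spirit, but it is used differently in the paper: it is applied after the non-terminal strong decomposition of $H$ discovers a dominant cluster, and it carves off a terminal-strong piece of $G$ with a provable balance bound, rather than repairing property (2) post hoc.

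In short: the cut-matching game with trimming is the key idea that makes both the ``find/certify'' step and the depth bound go through, and your outline would need to be replaced by something of that shape to produce a correct algorithm.
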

Properties 1 and 3 together directly generalize the notion of $s$-strong partitions to the terminal regime. Property~2 provides an additional guarantee useful for the Steiner algorithm of \cite{steiner-polylog-flows} and can be achieved with only polylogarithmic loss elsewhere in the decomposition.

\section{Preliminaries}
In this paper, all graphs are undirected and weighted. For simplicity, all weights are assumed to be polynomially bounded (however, this can be relaxed by adding logarithmic dependence on the maximum edge weight).

We begin by introducing standard definitions and tools from previous works that we will utilize for our algorithm. We also define our new modification of $s$-strong clusters to terminals. 

We use a standard definition of induced subgraphs using self-loops for boundary edges, which preserves degrees of vertices in subgraphs. We denote the induced subgraph of the vertex set $S \subset V$ on graph $G(V, E)$ as $G[S]$.

\subsection{Sparsity and Strength}
Sparsity is a specific measure of how connected a graph is. For a cut $(U, \overline{U})$, where $\overline{U} = V \setminus U$, we define $\partial U = \partial \overline{U}$ as the boundary of the cut, which is the set of edges between $U$ and $\overline{U}$.

\begin{definition}[Sparsity]
    The sparsity of a cut $(U,\overline{U})$ is defined as
\begin{equation}
    \Psi(U) = \frac{w(\partial U)}{\min\{|U|, |\overline{U}|\}} = \Psi(\overline{U})
\end{equation}
\end{definition}

We also introduce the new definition of terminal-sparsity for Steiner cuts with the terminal set $T$.
\begin{definition}[Terminal-sparsity]
The terminal-sparsity of a cut $(U,\overline{U})$ is defined as
\begin{equation}
    \Psi_T(U) = \frac{w(\partial U)}{\min\{|U \cap T|, |\overline{U} \cap T|\}} = \Psi_T(\overline{U})
\end{equation}
\end{definition}
We use the terms \emph{$\psi$-sparse} and \emph{$\psi$-terminal-sparse} to refer to cuts with sparsity and terminal-sparsity $< \psi$, respectively.

The concept of strength, introduced by Kawarabayashi and Thorup \cite{edge-connectivity}, is a relaxed notion of edge expanders. At a high level, a vertex subset $U\subseteq V$ is $s$-strong if every cut $(C,\overline C)$ of weight at most $\delta$ satisfies $\min\{\textbf{vol}(C\cap U),\textbf{vol}(\overline C\cap U)\}\le s$, where the volume $\textbf{vol}(S)$ is the sum of degrees of vertices in $S$. In \cite{edge-connectivity}, the parameter $\delta$ is chosen as the minimum degree of all vertices, which serves as an upper bound on the min-cut.

One of our main conceptual contributions is translating $s$-strength to the terminal setting and providing necessary generalizations to handle the Steiner min-cut problem. First, we work from a \emph{sparsity} viewpoint, which bounds the minimum cardinality of intersection $\min{\{|C \cap U|, |\overline{C} \cap U|\}}$ instead of volume, which is more handy when we start introducing terminals. Second, we can no longer choose $\delta$ as the minimum degree since it no longer upper bounds the Steiner min-cut. One natural choice is the minimum (weighted) degree of all vertices in set $S$, but instead, for technical reasons, we set $\delta$ closer to the minimum Steiner cut $\lambda$ itself. For now, we keep $\delta$ as a free parameter and provide our $s$-strong guarantees in terms of $\delta$. Finally, we need an additional requirement that if the cut $(C,\overline C)$ cuts any edges inside a cluster, then it must cut sufficiently many such edges, and we introduce another parameter $\gamma$ to capture this condition.

\begin{definition}[$(s, \delta, \gamma)$-strength]
    A vertex subset $U \subseteq V$ (called a \emph{cluster}) is \emph{$(s, \delta, \gamma)$-strong in $G$} if every cut $(C, \overline{C})$ of graph $G$ with at most weight $\delta$ satisfies $\min{\{|C \cap U|, |\overline{C} \cap U|\}} \leq s$, and moreover, if $\min{\{|C \cap U|, |\overline{C} \cap U|\}}>0$ then $w(\partial_{G[U]}C) \geq \gamma \cdot \delta$.
\end{definition}

Next, we introduce the notion of $(s, \delta, \gamma)$-terminal-strength, where the ``size'' of a set of vertices is only determined by its number of terminals. This property is necessary to deal with cuts separating terminal vertices instead of just regular ones.

\begin{definition}[$(s, \delta, \gamma, T)$-terminal-strength]
    A vertex subset $U \subseteq V$ (called a cluster) is $(s, \delta, \gamma, T)$-terminal-strong in $G$ if every Steiner cut $(C, \overline{C})$ of graph $G$ with at most weight $\delta$ satisfies \\
    $\min{\{|C \cap U \cap T|, |\overline{C} \cap U \cap T|\}} \leq s$, and moreover, if $|C \cap U \cap T| > 0$ and $|\overline{C} \cap U \cap T|\} > 0$, then $w(\partial_{G[U]}C) \geq \gamma \cdot \delta$. 

    For the rest of the paper, we sometimes omit the ``\emph{in $G$}'' and the terminal set $T$ from the definition whenever they are apparent from the context.
\end{definition}
An important property of both $(s, \delta, 0)$-strength and terminal strength is that the property is inherited by subgraphs, i.e., \ if $G(V,E)$ is $(s, \delta, 0)$-strong or terminal-strong, then $G[A]$ is as well for all $A \subseteq V$. This property holds for $s$-strength \cite{edge-connectivity}, and is straightforward to verify that the same is true for our $(s, \delta, 0)$-strength definitions.

Lastly, we also define terminal-strong decompositions, which are analogous to $s$-strong and expander decompositions, except that we split our graph into $(s, \delta, \gamma)$-terminal-strong components as opposed to $s$-strong sets and expanders, respectively.

\begin{definition}[$(s, \delta, \gamma, T)$-terminal-strong decomposition]
    A set of disjoint vertex clusters \linebreak $V_1, V_2,...,V_\ell$ is an $(s, \delta, \gamma, T)$-terminal-strong decomposition if each cluster $V_i$ is $(s, \delta, \gamma, T)$-terminal-strong, and if the total weight of edges between clusters is at most $\Tilde{O}(\delta\cdot|T|)$.
\end{definition}

Our primary new technical tool is a fast algorithm for computing an $(s,\delta,\gamma,T)$-terminal-strong decomposition with a small bounded weight of intercluster edges (\cref{theorem:terminal_decomposition}), which is presented in detail in \cref{sec:cut_game}. At a high level, we use a (non-terminal) $(s, \delta, \gamma)$-strong decomposition to devise an algorithm that finds an $(s, \delta, \gamma, T)$-terminal-strong decomposition through the \emph{cut-matching game} framework of Khandekar, Rao, and Vazirani~\cite{single-commodity-flows}, which we outline in the following subsection.

Finally, given such a decomposition, we use the framework of \cite{steiner-polylog-flows}, replacing their expander decomposition step with our $(s,\delta,\gamma)$-strong decomposition. We leave the details to \Cref{sec:sparsification}.

\subsection{Cut-Matching Game}
We start with an overview of the cut-matching game.
\begin{enumerate}
    \item The cut player chooses a bisection $(S, \overline{S})$ of the graph $H_{t-1}$ based on a given strategy.
    
    \item The matching player chooses a perfect matching of the bisection based on a given strategy.
    
    \item The cut player adds the edges of the perfect matching to graph $H_{t-1}$, forming graph $H_t$.
\end{enumerate}

The game continues until graph $H_t$ is an edge-expander. The key insight of the cut-matching game is that there is always a strategy for the cut player that finishes the game in few rounds. 

We use the cut-matching game to reduce our problem from one with terminals ($(s, \delta, \gamma)$-terminal-strong decomposition) to one without terminals ($(s, \delta, \gamma)$-strong decomposition). We then adapt the $(s, \delta, 0)$-strong decomposition algorithm of \cite{near_linear_mincut} to obtain an $(s, \delta, \gamma)$-strong decomposition for large enough $\gamma$.

\section{Minimum Steiner Cut Algorithm Overview}
The following is an overview of our algorithm (\cref{alg:steiner-min-cut}) to solve minimum Steiner cut on an undirected, weighted graph $G$ deterministically in near-linear time (i.e., $\Tilde{O}(m)$) plus polylogarithmic maximum flow calls. Throughout, we assume that we have guessed the value of the Steiner mincut up to factor $2$ (which we denote $\tilde\lambda$) by, for example, guessing all powers of $2$ (incorrect guesses may return an overestimate of the minimum Steiner cut, but we can take the minimum cut ever found at the end).
\begin{enumerate}
    \item We use the ``unbalanced case'' of \cite{steiner-polylog-flows} to find the Steiner minimum cut $(C,\overline C)$ if $\min\{|C\cap T|,|\overline C\cap T|\}\le\textup{polylog}(n)$. This algorithm is described in \cref{lemma:unbalanced_case}.
    
    \item In the ``balanced case'', we find an $(s, \delta, \gamma)$-terminal-strong decomposition on the graph. To do this, we use the cut-matching game on a graph $H$ containing only the terminals of the original graph, with $s=\textup{polylog}(n)$, $\delta=\tilde\lambda$, and $\gamma=1/\textup{polylog}(n)$. This algorithm is described in \Cref{alg:cut-game} (\cref{sec:cut_game}). At a high level, we use a (non-terminal) $(s', \delta', \gamma')$-strong decomposition to find an $(s, \delta, \gamma)$-terminal-strong decomposition for appropriate parameters $s',\delta',\gamma',s,\delta,\gamma$.

    \item Using our $(s, \delta, \gamma)$-terminal-strong decomposition, we find a set $T' \subseteq T$ and $|T'| \leq |T|/2$ such that the minimum Steiner cut of $G$ with terminal set $T'$ is the same as with terminal set $T$. In this case, we recursively apply our minimum Steiner cut algorithm on graph $G$ with terminal set $T'$ (\cref{sec:sparsification}).
\end{enumerate}

\begin{algorithm}
    \SetKwInOut{Input}{Input}
    \SetKwInOut{Output}{Output}
    
    \Input{Undirected weighted graph $G$, terminal set $T \subseteq V$, $\gamma=1/\textup{polylog}(n)$, $k = \textup{polylog}(n)$}

    \For{$i \gets 1$ \KwTo $O(\log n)$}{
        $U \gets T$
    
        $\Tilde{\lambda}\gets 2^i$

        \Do{$|U|>k$}{
            Run algorithm from Unbalanced Case (\cref{lemma:unbalanced_case})
            with terminal set $U$ \label{line:unbalanced}\tcp{Unbalanced Case} 
            
            Run \textsc{Terminal-Decomp($V, U, \Tilde{\lambda}, \gamma$)}
            
            Find sparsified set $U' \subset U$ \tcp{\cref{theorem:sparsification}, Balanced Case}

            $U \gets U'$
        }
    }

    \Return{minimum weight Steiner cut over all iterations of \cref{line:unbalanced}}
    \caption{\textsc{Minimum-Steiner-Cut($G, T$)}}
    \label{alg:steiner-min-cut}
\end{algorithm}

We give a high-level analysis of the runtime, which we formally prove in the following sections. Each call of terminal decomposition takes polylogarithmic max-flow computations and at most near-linear time with respect to the graph outside of the max-flows. Since the sparsification procedure halves the terminal set each iteration, it adds at most a $\log n$ extra factor in runtime. Along with the additional $\log n$ factor for guessing $\Tilde{\lambda}$, this gives us our claimed runtime.
\section{Terminal Decomposition Using Cut-Matching Game} \label{sec:cut_game}
The goal of the cut-matching game is to try to certify that the entire vertex set $V$ is $(s, \delta, \gamma, T)$-terminal-strong in $G$ by iteratively constructing our cut-graph $H$ to be $(s, \Tilde{O}(\delta), \gamma)$-strong. This may not always be possible, but throughout the cut-matching game, the algorithm may also verify that $V$ is $(s,\delta,\gamma,C)$-terminal-strong for a subset $C\subseteq T$ with $|C|\ge2|T|/3$. In that case, we apply a \emph{trimming} procedure similar to \cite{expander-pruning}. Otherwise, if this is also not possible, we will be able to find a balanced sparse cut in the cut-graph $H$. We then run a max-flow between the two terminal sets in $G$, which outputs either a large flow or (by duality) a small cut. In the former case, we add a corresponding large (fractional) matching to the cut graph $H$. In the latter case, we immediately find a balanced terminal-sparse cut in $G$, at which point we recursively decompose the two sides.

Before going through the formal procedure, we give high-level overviews of the cut and matching player strategies. 

\paragraph*{Cut Player Description}
The cut player attempts to find a sparse, balanced cut $(U,T\setminus U)$ in the cut graph, which ensures that we make sufficient progress when the matching player creates a matching. If the cut player fails to find a cut, we terminate the cut-matching game and prove that the original graph $G$ satisfies desirable properties.

\begin{figure*}[ht]
    \centering
    \fbox{
    \begin{minipage}{0.9\textwidth}
    \textbf{Cut Player Strategy on current cut-graph $H$}

    \begin{itemize}
        \item Find an $(s, \delta, \gamma)$-strong decomposition on cut graph $H$
        \item If a cluster with size greater than $2|T|/3$ exists, we terminate the cut-matching game. We trim the cluster according to \Cref{alg:trimming} and certify the cluster $U$ as $(s, \delta, \gamma)$-terminal-strong. We then recursively apply \cref{alg:terminal_decomposition} on the smaller side. 
        
        \item Otherwise, we merge the clusters into two groups that each contains between 1/3 and 2/3 fraction of all vertices of $H$ (this is always possible, see \cref{claim:large_clusters}). Denote the bipartition as $(C, \overline C)$.
    \end{itemize}
   \end{minipage}
    }
\end{figure*}

\paragraph*{Matching Player Description}
The matching player's goal is to add edges in the cut graph corresponding to the maximum possible flow in $G$ from one side of the bipartition to the other. They do this by running a maximum flow algorithm across the bipartition. The matching player adds edges to the cut graph if a large flow is successfully routed. Otherwise, a terminal-balanced cut is found, and we terminate the cut-matching game and recursively apply our terminal-strong-decomposition algorithm on both sides of the cut.

\begin{figure*}
    \centering
    \fbox{
    \begin{minipage}{0.9\textwidth}
    \textbf{Matching Player Strategy on bipartition $C$ of cut-graph $H$}
    
        \begin{itemize}
           \item We calculate a max-flow on graph $G$ between the terminals in $C$ and $T \setminus C$ using \cref{alg:trimming}. If the flow has value at least $|T|/6 \cdot \delta \cdot \psi$, we call the flow a ``large flow''. Otherwise, the flow has value less than $|T|/6 \cdot \delta \cdot \psi$, so we call the corresponding cut a ``small cut''. 
           \item If we find a large flow, we add a large matching into cut graph $H$: we break down the flow into paths and add edges between vertices in graph $H$ with the same corresponding weights as the flow paths.
           \item If we find a small cut, we certify the minimum cut found as a terminal-balanced, terminal-sparse cut. We stop the cut-matching game and recursively apply our terminal-decomposition algorithm on both sides.
        \end{itemize}
    \end{minipage}
    }
\end{figure*}

We formally define strategies for the cut and matching players in this game in \cref{alg:cut-game}. Our guarantee given by our cut-matching game method is stated as the following:
\begin{lemma}\label{lemma:cut_game_guarantee}
    Given an undirected weighted graph $G$ and parameters $\delta>0$, $\psi=1/\textup{polylog}(n)$, \cref{alg:cut-game} runs in time $\tilde O(m)$ plus $\textup{polylog}(n)$ calls to maximum flow, and outputs one of the following:
    \begin{enumerate}
        \item An $(O(\log^9n/\psi^5), \delta, \Omega(\psi^5/\log^9n),T)$-terminal-strong cluster $U$ with $|U\cap T|\ge|T|/3$ such that $\overline U$ is either empty or $\psi\cdot\delta$-terminal-sparse, or
        \item A $\psi\cdot\delta$-terminal-sparse cut $(U, \overline{U})$ with $|U\cap T|,|\overline U\cap T|\ge|T|/6$
    \end{enumerate}
\end{lemma}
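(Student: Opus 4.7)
The plan is to prove the lemma by a potential-argument analysis of the cut-matching game, combined with a flow-to-matching and matching-to-cut translation between $G$ and the cut-graph $H$. Throughout, $H$ is a graph on vertex set $T$ whose edges are the matchings added so far; each edge carries weight coming from a decomposed flow path in $G$. I would first establish the invariant that every cut in $H$ of weight $\le w$ can be certified by a corresponding cut in $G$ of weight at most $w/\gamma'$ (up to polylog factors), because each $H$-edge is realized by a flow path of equal weight in $G$, and every $G$-cut of weight $W$ can destroy at most $W$ units of flow.

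I would first bound the number of rounds. Each round in which the matching player reports a large flow adds a matching of weight $\Omega(|T|\cdot\delta\cdot\psi)$ into $H$, which, via the Khandekar--Rao--Vazirani-style potential (or the version adapted to $s$-strong decompositions as in the previous subsection), provably increases the $(s,\delta,\gamma)$-connectivity of $H$ by a $1/\textup{polylog}(n)$ factor. Hence after $O(\log^c n/\psi^c)$ rounds for some fixed $c$, the cut player can no longer produce a balanced sparse cut in $H$, so its $(s,\delta,\gamma)$-strong decomposition of $H$ must contain a cluster $U$ with $|U\cap T|\ge 2|T|/3$. Each round performs one max-flow call on $G$, one $(s,\delta,\gamma)$-strong decomposition of $H$ (polylog max-flow calls on the much smaller graph $H$), and near-linear bookkeeping, which together yields the claimed $\tilde O(m)$ time plus polylog max-flow calls on $G$.

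The correctness argument splits by termination mode. If the matching player ever returns a small cut, the value being below $|T|/6\cdot\delta\cdot\psi$ on a bipartition with $\ge|T|/3$ terminals on each side directly gives the $\psi\cdot\delta$-terminal-sparse cut of Case 2. Otherwise the cut player eventually produces a large cluster $U\subseteq T$, which is $(s',\delta',\gamma')$-strong in $H$ for $s'=\textup{polylog}(n)/\psi^{O(1)}$, $\delta'=\tilde\Theta(\delta)$, $\gamma'=\Omega(\psi^{O(1)}/\textup{polylog}(n))$. Applying the trimming procedure (Algorithm~\ref{alg:trimming}) prunes the boundary-loss, and I would show that the trimmed cluster is $(O(\log^9n/\psi^5),\delta,\Omega(\psi^5/\log^9n),T)$-terminal-strong in $G$: any Steiner cut $(C,\overline C)$ in $G$ of weight $\le\delta$ cuts at most $\delta$ weight of the embedded flow paths, hence at most $\delta$ weight of $H$-edges, so the $(s',\delta',\gamma')$-strength of $U$ in $H$ forces $\min\{|C\cap U\cap T|,|\overline C\cap U\cap T|\}\le s$; the in-cluster edge lower bound $\gamma\delta$ follows from the $\gamma'$-guarantee on $H$ together with the flow-path realization. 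The remaining side $\overline U$ has $|\overline U\cap T|\le|T|/3$; trimming ensures that if it is nonempty then the cut $(U,\overline U)$ itself has weight below $\psi\cdot\delta$, producing the alternative guarantee of Case 1.

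The main obstacle will be the translation step in the last paragraph: converting the purely $H$-side guarantee into a $G$-side terminal-strength guarantee while keeping the parameters $(s,\gamma)$ within polylog factors. This requires careful accounting of (i) how weight of flow paths across a $G$-cut bounds weight of matching edges across the induced $H$-cut, (ii) how the trimming step alters this book-keeping (trimming may remove terminals and their incident embedded paths), and (iii) the compounding of the cut-matching game's round count with the $(s',\delta',\gamma')$ parameters inherited from the internal $(s,\delta,\gamma)$-strong decomposition subroutine on $H$; tuning these is what produces the specific exponents $\log^9n/\psi^5$ and $\psi^5/\log^9n$ in the final statement.
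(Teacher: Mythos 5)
Your plan reproduces the paper's argument in all essential respects: a cut-matching game on a terminal graph $H$ where the cut player runs an $(s,\alpha\delta,\gamma)$-strong decomposition, the matching player calls \textsc{CutOrFlow} (max-flow on $G$), an entropy-potential argument bounds the rounds, a trimming step converts a large strong cluster in $H$ into a terminal-strong cluster in $G$, and the small-cut exit returns a balanced terminal-sparse cut. You also correctly isolate the load-bearing step, the translation from $H$-side strength to $G$-side terminal-strength, which is the paper's Lemma~\ref{lem:condition1_induced_expander} followed by Theorem~\ref{theorem:trimming_main}.

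Two imprecisions you would need to repair in a full write-up, though neither undermines the structure. First, the $(s,\delta,\gamma)$-strong decomposition of the cut graph $H$ (Lemma~\ref{lemma:partition}) is a purely combinatorial $\tilde O(m)$-time procedure lifted from the near-linear min-cut line of work; it makes \emph{no} max-flow calls, so the only max-flows charged to each round are the single \textsc{CutOrFlow} calls on $G$. Second, in your translation invariant and your later sentence ``cuts at most $\delta$ weight of $H$-edges'', a $G$-cut of weight $\delta$ induces an $H$-cut of weight at most $(L_{\max}/\psi)\cdot\delta$, not $\delta$: the flow paths have congestion up to $L_{\max}$ across rounds and the $H$-edges carry weight $1/\psi$ times the routed capacity. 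The paper absorbs this by requiring $H$-clusters to be $(s,(L_{\max}/\psi)\delta,\gamma)$-strong, which is where the inflation by $\alpha=L_{\max}/\psi$ in Line~\ref{line:partition} comes from; matching it correctly is exactly what yields the $\log^9 n/\psi^5$ and $\psi^5/\log^9 n$ exponents in the final guarantee.
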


The correctness of the Cut Player strategy is shown in \cref{subsec:cut_player}, matching player strategy in \cref{subsec:matching_player}, and the termination within $L_{\text{max}}$ rounds is proved in \cref{subsec:termination}.

\begin{algorithm}
    \SetKwInOut{Input}{Input}
    \SetKwInOut{Output}{Output}
    \Input{Undirected weighted graph $G$, terminal set $T \subseteq V$, decomposition parameters $\delta>0$ and $\psi<1$.}
    \Output{Either \textbf{(1)} a $(\psi\cdot\delta)$-terminal-sparse cut $(U, \overline{U})$ with $|U\cap T|,|\overline U\cap T|\ge|T|/6$, or\linebreak \textbf{(2)} an $(\Tilde{O}(\psi^{-5}), \delta, \Tilde{\Omega}(\psi^5),T)$-terminal-strong cluster $U$ with $|U\cap T|\ge|T|/3$ such that $\overline U$ is either empty or $(\psi\cdot\delta)$-terminal-sparse}
    
    Initialize cut graph: $H=(T,\emptyset)$

    Initialize unmatched terminal set: $S \gets \emptyset$

    Set parameters $L_{\max}=O(\log|T|)$, $\alpha=L_{\max}/\psi$, $s=O((L_{\max}/\psi)^2\log^2n)$, $\gamma=\frac1{200\alpha s}$
    
    \For{$t \gets 1$ \KwTo $L_{\max}$} {
        \tcp{Cut Player Strategy}
        Partition $T$ into clusters that are $(s, \alpha\delta, \gamma)$-strong in $H$. (\Cref{lemma:partition}) \label{line:partition}

        \uIf{there exists cluster $C$ of size $\geq 2|T|/3$}{
            \tcp{Trim Cluster}
            $(f, (U, \overline U)) \gets \textsc{CutOrFlow}(G, C, \min\{\gamma/2s,\gamma/6\})$ \label{line:trimming}
            
            \Return larger side $U$ with cut $(U,\overline U)$ if $\overline U$ is non-empty\label{line:larger-side-U}
        }

        Combine clusters to create a bipartition $(C, \overline{C})$ with $|C| \ge |\overline{C}| \ge |T|/3$. (\Cref{claim:large_clusters}) \label{line:combine}

        \tcp{Matching Player Strategy}
        $(f, (U, \overline U)) \gets \textsc{CutOrFlow}(G, C, \psi)$

        \uIf{flow $f$ has value $\ge |T|/6 \cdot\delta\cdot\psi$}{ \label{line:if-flow}
            \tcp{Large Flow}
            Decompose flow into (implicit) paths $f_1, f_2, ..., f_k$ where $k\leq m$ and each path connects exactly two terminals ($t_1, t_2$), one from each bipartition
        
            For each $i\in[k]$, add an edge $(t_1, t_2)$ in $H$ whose weight is $1/\psi$ times the capacity of path $f_i$ in $G$ \label{line:add-edges}
        }
        \Else{
            \tcp{Terminal-Balanced Cut}
            \Return terminal-balanced cut $(U,\overline U)$
        }
    }
    \tcp{Game must terminate within $L_{\max}$ rounds}
    \caption{\textsc{Cut-Game($G, T, \delta, \psi$)}}
    \label{alg:cut-game}
\end{algorithm}

\subsection{Cut Player}
\label{subsec:cut_player}
The lemma below for $\alpha=L_{\max}/\psi$ shows that \Cref{line:partition} of \cref{alg:cut-game} can be computed efficiently. We apply the lemma on graph $H$ and vertex set $T$.
\begin{lemma}\label{lemma:partition}
Given any parameters $\delta>0$ and $\alpha\le\textup{polylog}(n)$ and a graph $G=(V,E)$ with total edge weight at most $\alpha\delta n$, there exists $s\le O(\alpha^2\log^2n)$ and $\gamma=\Omega(1/s)$ and an algorithm in $\tilde O(m)$ time that outputs a decomposition of $V$ into $(s, \alpha\delta, \gamma)$-strong clusters such that the total weight of inter-cluster edges is at most $n\delta/50$.
\end{lemma}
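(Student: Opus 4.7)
The plan is to produce the decomposition by iterative sparse-cut splitting in the sparsity regime. Starting from the single cluster $V$, I would repeatedly search for a cut of low sparsity inside any current cluster and split along it; a cluster is declared final once no such internal sparse cut can be found. Setting the primary threshold to $\psi_0 = \Theta(\delta/\log n)$ makes the inter-cluster weight telescope to $n\delta/50$ via smaller-side charging, while leaving room to absorb an $O(\log n)$ approximation factor from the sparse-cut primitive.

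For the primitive itself I would invoke a near-linear, max-flow-free weighted sparse-cut subroutine (of the type implicit in the $s$-strong decomposition of \cite{near_linear_mincut}, which in turn adapts Kawarabayashi--Thorup-style techniques) with approximation factor $O(\log n)$: each invocation either returns a cut of sparsity $O(\psi_0)$, or certifies that no cut of sparsity at most $\psi_1 := \psi_0/\Theta(\log n)$ exists in the current cluster. The overall running time amortizes to $\tilde O(m)$ over the recursion by the standard ``each vertex appears on the smaller side at most $\log n$ times'' accounting.

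To verify $(s,\alpha\delta,\gamma)$-strength of a final cluster $U$, fix any cut $(C,\overline C)$ of $G$ with $w(\partial_G C) \le \alpha\delta$ and $k := \min\{|C\cap U|, |\overline C \cap U|\} \ge 1$. Since $\partial_{G[U]}(C\cap U) \subseteq \partial_G C$, the induced boundary satisfies both $w(\partial_{G[U]}(C\cap U)) \le \alpha\delta$ and, by the no-sparse-cut certificate on $U$, $w(\partial_{G[U]}(C\cap U)) \ge \psi_1\cdot k$. Combining gives $k \le \alpha\delta/\psi_1 = O(\alpha\log^2 n)$, comfortably inside $s = O(\alpha^2\log^2 n)$; the same inequality yields $w(\partial_{G[U]}(C\cap U)) \ge \psi_1 = \Omega(\delta/\log^2 n) \ge \gamma\alpha\delta$ for $\gamma = \Omega(1/s)$ whenever $\alpha\ge 1$, so the $\gamma$-refinement drops out automatically. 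The inter-cluster weight bound follows by charging each split of boundary weight $w$ to the $\ge w/\psi_0$ vertices on its smaller side, and applying the laminar-family bound $O(n\log n)$ on the sum of smaller-side cardinalities; the constant in $\psi_0$ is then tuned to hit the $n\delta/50$ target.

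The main obstacle is securing a near-linear, max-flow-free weighted sparse-cut primitive with the stated $O(\log n)$ approximation. If the algorithm of \cite{near_linear_mincut} does not export exactly this guarantee, the real work would be to re-read its decomposition subroutine as a sparse-cut oracle in the correct regime and re-amortize the running time over the recursion; the $\gamma$ improvement over a $(s,\delta,0)$-strong decomposition requires no extra peeling step, since, as shown above, it is already implied by the sparsity lower bound.
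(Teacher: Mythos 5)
Your proposal is an expander decomposition in disguise: you iteratively peel off cuts whose (vertex) sparsity is below a threshold until no cluster has any $\psi_1$-sparse cut. That does produce $(s,\alpha\delta,\gamma)$-strong clusters with the $\gamma$ condition coming for free, exactly as you argue, and the smaller-side charging gives the boundary bound. The arithmetic you carry out is fine. But the entire plan rests on a primitive that does not exist in the deterministic near-linear setting: a $\tilde O(m)$-time deterministic routine that either returns a cut of sparsity $O(\psi_0)$ or certifies that no cut of sparsity $\le \psi_0/\Theta(\log n)$ exists. The state of the art for deterministic expander decomposition is the almost-linear ($m^{1+o(1)}$) algorithm of \cite{almost_linear_expander}, and the paper explicitly flags that improving this to near-linear is open. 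If your ``max-flow-free weighted sparse-cut subroutine'' existed, it would resolve that open problem.

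This is precisely why the paper does \emph{not} take the sparse-cut route. It instead invokes the $(s,\delta,0)$-strong partition of \cite{near_linear_mincut} (Lemma~\ref{lem:strong-partition}), which runs in near-linear time because $s$-strength is a genuinely weaker guarantee than ``no sparse cut'': an $(s,\alpha\delta,0)$-strong cluster is perfectly allowed to contain very sparse cuts, as long as any such cut has total weight exceeding $\alpha\delta$. The price for this weaker guarantee is the $O(\sqrt{\delta_0/s_0}\,\log n)$ boundary fraction (rather than the $O(\log n)$ fraction an expander decomposition would give), which is why the paper ends up with $s=O(\alpha^2\log^2 n)$ rather than your cleaner $O(\alpha\log^2 n)$. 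And because $(s,\alpha\delta,0)$-strong clusters \emph{can} contain arbitrarily light internal cuts, the upgrade to $\gamma>0$ does not drop out for free; the paper needs the separate contraction-and-deletion argument of Lemma~\ref{lemma:partition-strong} to achieve it. So the step you claim ``requires no extra peeling'' is exactly where the paper does extra work. In short: your verification logic is sound, but the oracle you plug it into is the thing you would actually have to build, and building it is at least as hard as deterministic near-linear expander decomposition, which is open.
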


The first step is to apply the following lemma to a slightly modified graph.

\begin{lemma}[\cite{near_linear_mincut}]\label{lem:strong-partition}
Given a weighted graph $G=(V,E,w)$ and a parameter $\delta_0$ such that $\delta_0\leq \min_{v\in V}\deg(v)$ and a parameter $s_0\le \delta_0\textup{polylog}(n)$, there is an algorithm that runs in $\tilde O(m)$ time and partitions the vertex set $V$ into components $V_1,\ldots,V_k$ such that
 \begin{enumerate}
 \item For any cluster $V_i$ and any cut $(S,\overline S)$ in $G$ of weight at most $\delta_0$, we have $\min\{\textbf{\textup{vol}}(S\cap V_i),\textbf{\textup{vol}}(\overline S\cap V_i)\}\le s_0$. Here, $\textbf{\textup{vol}}(U)$ is the sum of weighted degrees of vertices in $U$.
 \item The total weight of inter-cluster edges is at most an $O(\frac{\sqrt{\delta_0} \log n}{\sqrt {s_0}})$ fraction of the total weight of edges.
 \end{enumerate}
\end{lemma}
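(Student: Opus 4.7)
The plan is to realize property (1) via expander decomposition of the induced subgraphs, and to obtain the claimed $\sqrt{\cdot}$ inter-cluster bound by combining a coarse expander decomposition with a finer refinement controlled by a charging argument.

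First I would observe that property (1) follows if every induced subgraph $G[V_i]$ has conductance at least $\phi^\star := \delta_0/s_0$. Indeed, for any global cut $(S,\overline S)$ of weight at most $\delta_0$, the restricted cut in $G[V_i]$ has boundary weight at most $\delta_0$, so conductance at least $\phi^\star$ forces the smaller side to have volume at most $\delta_0/\phi^\star = s_0$. The hypothesis $s_0 \le \delta_0 \cdot \text{polylog}(n)$ guarantees $\phi^\star \ge 1/\text{polylog}(n)$, placing us squarely in the regime where deterministic weighted expander decomposition runs in $\tilde O(m)$ time, e.g., via a cut-matching game driven by a near-linear-time single-commodity flow primitive.

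A direct application of expander decomposition at conductance $\phi^\star$ gives inter-cluster weight at most $\tilde O(\phi^\star)\cdot W$, where $W$ is the total edge weight. This matches the target only up to a square root, so to obtain the claimed bound I would use a two-phase decomposition. Phase one runs a coarse weighted expander decomposition at the larger parameter $\phi_1 := \sqrt{\delta_0/s_0}$, incurring inter-cluster weight $\tilde O(\phi_1)\cdot W = \tilde O(\sqrt{\delta_0/s_0}\log n)\cdot W$, which already matches the stated bound. Each resulting cluster is certified as a $\phi_1$-expander, so any weight-$\le\delta_0$ cut in $G$ restricts to each cluster with smaller side of volume at most $\delta_0/\phi_1 = \sqrt{\delta_0 s_0}$, overshooting the target $s_0$ by a factor $\sqrt{\delta_0/s_0}$. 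Phase two refines within each $\phi_1$-expander using local-trimming primitives (in the spirit of expander pruning) to peel off low-volume pieces until the smaller-side bound of $s_0$ is guaranteed.

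The heart of the proof, and the main obstacle, is showing that phase two does not increase the total inter-cluster weight by more than another $\tilde O(\sqrt{\delta_0/s_0}\log n)\cdot W$. I would argue this by a geometric-scale charging: refine at $O(\log n)$ volume scales $\sqrt{\delta_0 s_0}, \tfrac12\sqrt{\delta_0 s_0},\dots,s_0$, and at each scale use the $\phi_1$-expander certificate to bound the boundary of any peeled piece of volume $v$ by $O(\phi_1 v)$, so that summing across scales and across peeled pieces yields a total extra boundary of $\tilde O(\phi_1)\cdot W$. Since each phase uses near-linear-time primitives and the number of refinement scales is $O(\log n)$, the overall runtime is $\tilde O(m)$, and the combined inter-cluster bound is $\tilde O(\sqrt{\delta_0/s_0}\log n)\cdot W$, matching the statement. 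The delicate point in this charging is ensuring that peeled pieces do not cascade—i.e., that removing a low-volume piece does not destroy the expander certificate on the remaining cluster—which is where I expect a pruning-style argument à la Saranurak–Wang to be required.
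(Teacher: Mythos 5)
The paper does not prove this lemma at all; it is a black-box citation of a result from \cite{near_linear_mincut}. So there is no ``paper proof'' to compare against, only the question of whether your sketch could stand on its own — and it cannot, for the following reason.

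Your entire plan rests on invoking a \emph{deterministic} weighted expander decomposition that runs in $\tilde O(m)$ time once $\phi \ge 1/\textup{polylog}(n)$. No such primitive is known. As this very paper emphasizes in its introduction, the state-of-the-art deterministic expander decomposition \cite{almost_linear_expander} runs in $m^{1+o(1)}$ time, and improving that to near-linear is an open problem; the randomized Saranurak--Wang-style algorithms that achieve $\tilde O(m/\phi)$ rely on a randomized cut player (random-projection-based cut strategy) and do not derandomize cheaply. The whole reason the $s$-strong framework of Kawarabayashi--Thorup and \cite{near_linear_mincut} exists is that it sidesteps expander decomposition entirely: it proves a deliberately \emph{weaker} guarantee that admits a genuinely near-linear deterministic algorithm. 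Notice that in the non-vacuous regime $s_0 \gtrsim \delta_0\log^2 n$ the lemma's inter-cluster fraction $O(\sqrt{\delta_0/s_0}\,\log n)$ is strictly worse than the $O((\delta_0/s_0)\log n)$ fraction a true $(\delta_0/s_0)$-expander decomposition would deliver; the lemma intentionally gives up that factor to buy speed, and your proposal reverses that trade by assuming the stronger primitive.

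There is also an internal arithmetic problem in your two-phase plan. In the non-vacuous regime $s_0 > \delta_0$, your phase-one volume bound $\delta_0/\phi_1 = \sqrt{\delta_0 s_0}$ is \emph{smaller} than $s_0$, i.e., it undershoots the target rather than overshooting by $\sqrt{\delta_0/s_0}$, so phase two is not even needed; and in the regime $s_0 \le \delta_0$ where it would overshoot, the lemma's conclusion is vacuous because $\sqrt{\delta_0/s_0}\log n \ge 1$. Either way, the geometric-scale pruning refinement you outline is not doing work. The actual gap to close is the nonexistence (to date) of the near-linear deterministic expander decomposition your phase one invokes.
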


Construct the graph $G_0$ from $G$ as follows. For each vertex $v\in V$, add a new vertex $v'$ with an edge to $v$ of weight $\alpha\delta$. This new graph has minimum weighted degree $\alpha\delta$. Apply the lemma above to $G_0$ with parameters $\delta_0=\alpha\delta$ and $s_0=s\alpha\delta$. The total weight of inter-cluster edges is at most $O(\frac{\log n}{\sqrt s})\cdot \alpha\delta n \le n\delta/100$ for large enough $s=O(\alpha^2\log^2n)$. Since $G_0$ has minimum degree $\delta_0$, the guarantee $\min\{\textbf{\textup{vol}}(S\cap V_i),\textbf{\textup{vol}}(\overline S\cap V_i)\}\le s_0$ from property~1 implies that $\min\{|S\cap V_i|,|\overline S\cap V_i|\}\le s_0/\delta_0=s$. In other words, each $V_i$ is $(s,\alpha\delta,0)$-strong in $G_0$. Consider the partition in $G$ obtained by removing all new vertices $v'$. It is straightforward to see that this partition is also $(s,\alpha\delta,0)$-strong in $G$, and the total weight of inter-cluster edges is still at most $n\delta/100$.

We now modify the partition so that each cluster is $(s,\alpha\delta, \gamma)$-strong by applying the lemma below to each $V_i$. The total weight of additional inter-cluster edges guaranteed by the lemma is at most $\sum_i|V_i|\delta/100\le n\delta/100$. Together with the inter-cluster edges from the first step, the total weight is at most $n\delta/12$. It remains to prove the lemma below:

\begin{lemma}\label{lemma:partition-strong}
Let $C$ be an $(s,\alpha \delta,0)$-strong cluster in $G$ and let $\gamma=\frac1{200\alpha s}$. There is an algorithm in $\tilde{O}(|E(G[C])|)$ time that partitions $C$ into $(s,\alpha \delta,\gamma)$-strong clusters such that the total weight of inter-cluster edges is at most $|C|\delta/100$.
\end{lemma}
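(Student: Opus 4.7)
The approach is to iteratively split each cluster along internal cuts of small weight until no such cut remains. Formally, maintain a partition of $C$ starting from the single part $\{C\}$; while some current part $V_i$ admits a nontrivial cut $(A, V_i\setminus A)$ in $G[V_i]$ with $w(\partial_{G[V_i]} A) < \gamma\alpha\delta$, replace $V_i$ by the two parts $A$ and $V_i \setminus A$ and add the cut edges to the inter-cluster edge set. Upon termination, every remaining cluster $V_j$ is $\gamma\alpha\delta$-edge-connected in $G[V_j]$, which is exactly the condition needed to upgrade strength from $(s,\alpha\delta,0)$ to $(s,\alpha\delta,\gamma)$.

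For correctness, each $V_j \subseteq C$ inherits the $(s,\alpha\delta,0)$-strong property of $C$: for any cut $(A,\overline A)$ of $G$ of weight at most $\alpha\delta$, $\min\{|A\cap V_j|, |\overline A\cap V_j|\} \leq \min\{|A\cap C|,|\overline A\cap C|\}\leq s$. For the $\gamma$-condition, if this min is strictly positive, then $(A \cap V_j, V_j \setminus A)$ is a nontrivial cut of $G[V_j]$, and by the stopping condition $w(\partial_{G[V_j]} A) \geq \gamma\alpha\delta$, yielding $(s,\alpha\delta,\gamma)$-strength of $V_j$.

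Each split adds strictly less than $\gamma\alpha\delta = \delta/(200s)$ to the inter-cluster edge weight, and splits only increase the part count, so there are at most $|C|-1$ of them (the process trivially stops at singletons, which are vacuously strong); the total new inter-cluster weight is therefore at most $|C|\cdot \delta/(200s) \leq |C|\delta/100$, as required.

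The main obstacle is achieving $\tilde O(|E(G[C])|)$ running time. A naive per-cluster deterministic minimum-cut subroutine (e.g., \cite{near_linear_mincut}) is correct but may cost $\tilde O(|E(G[V_i])|)$ per split and thus lead to a quadratic blow-up under highly unbalanced splits. I would resolve this by running a single near-linear-time decomposition routine on $G[C]$ that directly certifies $\gamma\alpha\delta$-edge-connectivity of each output piece: either by reinstantiating \Cref{lem:strong-partition} on $G[C]$ with $\delta_0 = \Theta(\gamma\alpha\delta)$ (after padding each vertex with an auxiliary heavy edge to enforce the minimum-degree precondition and converting its volume-based guarantee into an absolute cut-weight guarantee), or by pipelining $O(\log |C|)$ balanced-cut subcomputations with recursion on pieces of geometrically decreasing edge count. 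Either route charges the total work to the disjoint edge sets $E(G[V_j])$ with only polylogarithmic overhead, giving the claimed runtime.
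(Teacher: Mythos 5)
Your iterative light-cut-splitting argument correctly establishes the \emph{existence} of a partition into $(s,\alpha\delta,\gamma)$-strong pieces with inter-cluster weight at most $(|C|-1)\gamma\alpha\delta \le |C|\delta/100$, and your verification that $\gamma\alpha\delta$-edge-connectivity of $G[V_j]$ plus the inherited $(s,\alpha\delta,0)$-strength of any $V_j\subseteq C$ yields $(s,\alpha\delta,\gamma)$-strength is sound. However, you correctly flag the runtime as the main obstacle, and neither of your proposed fixes actually closes the gap.

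Your first fix --- re-running \cref{lem:strong-partition} with $\delta_0=\Theta(\gamma\alpha\delta)$ --- cannot deliver what you need. That lemma certifies that every cut of $G$ of weight at most $\delta_0$ has small \emph{volume} intersection with each output piece; it says nothing prohibiting a piece from having a light internal cut whose two sides are both large. There is no generic ``conversion from a volume guarantee to an absolute cut-weight guarantee,'' and indeed this is exactly the gap that \cref{lemma:partition-strong} is meant to fill in the paper's two-step proof of \cref{lemma:partition}: \cref{lem:strong-partition} is applied first to get $(s,\alpha\delta,0)$-strong pieces, and this lemma upgrades them. Using it again with smaller $\delta_0$ is not circular, but it produces the wrong type of guarantee. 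Your second fix (recursion on balanced cuts) does not address the core failure mode: the light cuts you must split along can be arbitrarily unbalanced (e.g., peeling off a low-degree vertex each round), so the recursion depth and total work are not controlled.

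The paper sidesteps minimum-cut computation entirely via a local contraction-and-removal process: repeatedly contract pairs of vertices with at least $\gamma\alpha\delta$ edge weight between them, and remove any vertex whose (contracted) degree drops to at most $\delta/100$, charging $\delta/100$ of inter-cluster weight per removal. Crucially, the output pieces are \emph{not} required to be $\gamma\alpha\delta$-edge-connected; the final surviving cluster may well contain a light internal cut, but such a cut would have to split it into two large parts, and the $(s,\alpha\delta,0)$-strong hypothesis then forces any such cut of $G$ to have weight exceeding $\alpha\delta$, so it is irrelevant to the definition. This weaker, hypothesis-leveraging target is exactly what makes a purely local near-linear algorithm possible. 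The remaining work in the paper's proof is then a potential-function argument to implement the contraction/removal dynamically in $\tilde O(|E(G[C])|)$ time. Your proposal is missing both the local certification mechanism and the observation that the full edge-connectivity requirement you are trying to impose is both unnecessary and (as far as we know) not cheaply achievable deterministically.
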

This proof uses a similar technique found in \cite{near_linear_mincut}, and we leave the proof for \cref{appendix:partition-proof}.

Let the \emph{size} of a cluster be the number of vertices in the cluster. The following lemma shows that \Cref{line:combine} of \cref{alg:cut-game} can be executed efficiently.
\begin{claim} \label{claim:large_clusters}
    Suppose no single cluster has size greater than $2|T|/3$. Then, there exists a bipartition of clusters such that each group of clusters has a total size in the range $[|T|/3, 2|T|/3]$, and this bipartition can be computed in nearly linear time.
\end{claim}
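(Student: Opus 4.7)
The plan is to construct the bipartition via a simple greedy that sorts the clusters in decreasing order of size and grows one side up to the $2|T|/3$ cap. Concretely, let the clusters have sizes $s_1 \ge s_2 \ge \cdots \ge s_k$ with $\sum_i s_i = |T|$ and $s_1 \le 2|T|/3$ by hypothesis. Initialize $A \gets \emptyset$ and scan $i = 1, 2, \ldots, k$; add cluster $i$ to $A$ whenever doing so keeps $|A| \le 2|T|/3$, otherwise skip it. Return $(A, \overline A)$ where $\overline A$ is the complement. Because the first cluster is always addable (since $s_1 \le 2|T|/3$) and we never exceed the cap, $|A| \le 2|T|/3$ and hence $|\overline A| \ge |T|/3$ automatically, so the only thing to check is $|A| \ge |T|/3$.

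For the lower bound on $|A|$, I would split into two cases depending on $s_1$. If $s_1 \ge |T|/3$, then after the first step alone $|A| = s_1 \in [|T|/3, 2|T|/3]$ and we are done. Otherwise all cluster sizes are strictly less than $|T|/3$, so each increment to $|A|$ during the scan is smaller than $|T|/3$. In this regime, $|A|$ cannot ``jump over'' the window $[|T|/3, 2|T|/3]$ of width $|T|/3$: the first time $|A|$ reaches or exceeds $|T|/3$ it is still at most $|T|/3 + s_i < 2|T|/3$, so $|A|$ lies in the target interval at that moment and for the rest of the scan we only add clusters whose addition preserves $|A| \le 2|T|/3$. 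The only remaining sub-case is that the scan finishes with $|A| < |T|/3$, but then every skipped cluster $C_j$ had $|A|_{\text{current}} + s_j > 2|T|/3$, forcing $s_j > |T|/3$, contradicting $s_j \le s_1 < |T|/3$.

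For runtime, sorting $k \le |T|$ cluster sizes costs $O(k \log k)$ and the single linear scan is $O(k)$, so the procedure runs in $\tilde O(|T|) = \tilde O(m)$ time, well within the ``nearly linear'' budget. The clusters are already available from \Cref{line:partition} of \cref{alg:cut-game}, so no additional data-structure work is needed beyond reading off sizes.

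The main obstacle, such as it is, is ruling out the pathological ``overshoot'': a large cluster placed late in the scan could in principle push $|A|$ from below $|T|/3$ directly past $2|T|/3$ in one step. The decreasing-order sort is what prevents this, since after the first-step case $s_1 \ge |T|/3$ is handled separately, the remaining increments are all strictly less than $|T|/3$. Once that observation is in place, everything else is routine.
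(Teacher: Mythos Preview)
Your proof is correct and follows essentially the same approach as the paper: both arguments split on whether some cluster has size at least $|T|/3$ (yielding a singleton group immediately) and otherwise use a prefix-sum argument showing that increments smaller than $|T|/3$ cannot jump over the window $[|T|/3,2|T|/3]$. The only difference is cosmetic: the paper enumerates the clusters in arbitrary order and simply takes the shortest prefix whose total exceeds $|T|/3$, so your sorting step and the ``skip'' logic are unnecessary (indeed, in your Case~2 no cluster is ever skipped before $|A|\ge|T|/3$, so your greedy degenerates to exactly the paper's prefix rule).
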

\begin{proof}
    We can split our proof into two cases:
    \begin{enumerate}
        \item There exists a cluster with size $\in [|T|/3, 2|T|/3]$:

        We make that cluster its own group and all remaining clusters the second group.

        \item All clusters have size less than $|T|/3$:

        Enumerate the clusters in an arbitrary order, and consider the shortest prefix of clusters whose total size exceeds $|T|/3$. The prefix without its last cluster has total size less than $|T|/3$, and this last cluster of the prefix has size less than $|T|/3$, so this prefix has size in $[|T|/3, 2|T|/3]$. \qedhere
    \end{enumerate}
\end{proof}

\subsection{Matching Player}
\label{subsec:matching_player}
We introduce a key subroutine of the matching player, called \textsc{CutOrFlow}, which is used to find matchings over partitions and for trimming.
\begin{algorithm}
    \SetKwInOut{Input}{Input}
    \Input{Undirected weighted graph $G=(V,E)$ and a set of terminals $S \subseteq T$}

    $G' \gets G$

    Add sink node $t$ to $G'$ with edges from all terminals in $T\setminus S$ with capacity $\delta\cdot \kappa$. Add source node $s$ to $G'$ with edges to all terminals in $S$ with capacity $\delta\cdot \kappa$. 
    
    Compute $s-t$ max-flow $f'$ and $s-t$ min-cut $(U', \overline{U'})$ in $G'$

    Let $f$ be the flow $f'$ with vertices $s, t$ removed from every path

    Let cut $(U, \overline{U})$ in $G$ be $(U' \setminus s, \overline{U'} \setminus t)$
    
    \Return flow $f$ and cut $(U, \overline{U})$

    \caption{\textsc{CutOrFlow($G, S, \kappa$)}}
    \label{alg:trimming}
\end{algorithm}

\begin{lemma}\label{lemma:sparsity}
    If \cref{alg:trimming} returns a valid cut $(U, \overline{U})$, then it is $\delta\cdot \kappa$-terminal sparse in $G$.
\end{lemma}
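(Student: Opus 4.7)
The plan is to combine max-flow/min-cut duality with a direct edge accounting in the auxiliary graph $G'$. Recall that $G'$ extends $G$ by attaching a source $s$ with capacity-$\delta\kappa$ edges to each vertex of $S$ and a sink $t$ with capacity-$\delta\kappa$ edges from each vertex of $T \setminus S$; the reported cut $(U,\overline U)$ is obtained from the $s$-$t$ min-cut $(U',\overline{U'})$ of $G'$ by stripping $s$ and $t$.

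The first step is to decompose the min-cut value as
\[ V \;=\; w(\partial_G U) \;+\; \delta\kappa \cdot |S \cap \overline U| \;+\; \delta\kappa \cdot |(T \setminus S) \cap U|, \]
since the edges crossing $(U',\overline{U'})$ split cleanly into (i) original $G$-edges that already cross $(U,\overline U)$, (ii) $s$-edges going to terminals of $S$ placed on the $\overline U$ side, and (iii) $t$-edges coming from terminals of $T \setminus S$ placed on the $U$ side.

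The second step is to compare $V$ against the two trivial $s$-$t$ cuts $\{s\}$ and $V(G') \setminus \{t\}$, of capacities $\delta\kappa \cdot |S|$ and $\delta\kappa \cdot |T \setminus S|$ respectively. Since $V$ is the minimum over all $s$-$t$ cuts, both are upper bounds; plugging them into the decomposition and cancelling yields
\[ w(\partial_G U) \;\le\; \delta\kappa \cdot |S \cap U| \;\le\; \delta\kappa \cdot |U \cap T|, \qquad w(\partial_G U) \;\le\; \delta\kappa \cdot |(T \setminus S) \cap \overline U| \;\le\; \delta\kappa \cdot |\overline U \cap T|, \]
where each outer inequality uses $S \subseteq T$. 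Taking the minimum on the right gives $\Psi_T(U) \le \delta\kappa$ directly.

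The only real obstacle is upgrading this $\le$ to the strict $<$ required by the definition of $\psi$-terminal-sparse. The plan there is to invoke the hypothesis that the cut is \emph{valid}, meaning $U$ and $\overline U$ are both nonempty. A standard max-flow routine returns the min-cut that is the set reachable from $s$ in the residual graph, so $U = \emptyset$ exactly when $\{s\}$ is itself a min-cut; validity therefore forces $V < \delta\kappa \cdot |S|$, upgrading the first chain to a strict inequality and yielding $\Psi_T(U) < \delta\kappa$. A symmetric argument on the $t$-side handles the case where validity is witnessed by $\overline U \neq \emptyset$. Beyond this minor bookkeeping, the combinatorial content of the lemma is the one-line edge-accounting above.
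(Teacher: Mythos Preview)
Your core argument---decompose the min-cut value in $G'$ into original edges plus source/sink edges, then compare against the trivial cuts $\{s\}$ and $V(G')\setminus\{t\}$---is exactly the paper's argument, just presented with a slightly more explicit edge-accounting identity. The paper's proof also only establishes the non-strict bound $w_G(\partial U)\le\delta\kappa\cdot\min\{|U\cap T|,|\overline U\cap T|\}$, so on that level you match.

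Your attempt to upgrade to strict inequality goes beyond what the paper does, and it has a gap. The claim ``$U=\emptyset$ exactly when $\{s\}$ is itself a min-cut'' depends on the max-flow routine returning the \emph{source-side minimal} min-cut (reachability from $s$ in the residual graph), an implementation detail the paper does not fix. More importantly, even granting that choice, the ``symmetric argument on the $t$-side'' fails: reachability from $s$ gives the \emph{minimal} $s$-side among all min-cuts, not the maximal one, so $\overline U\neq\emptyset$ does \emph{not} rule out that the trivial $t$-cut $V(G')\setminus\{t\}$ is also a min-cut. Hence you cannot conclude $V<\delta\kappa\cdot|T\setminus S|$ from validity alone, and the second chain need not become strict. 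Since the paper is content with $\le$ (and every downstream use of the lemma only needs the weak inequality to bound total inter-cluster weight), this is a cosmetic mismatch with the literal definition rather than a substantive error; but your proposed fix for it does not work as written.
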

\begin{proof}
    Without loss of generality, assume that $U$ contains at most as many terminals as $V \setminus U$. Consider the difference between the edges of $\partial U'$ and the edges of the cut $\partial(\{s\})$, which cuts all edges adjacent to $s$. The edges in $\partial U' \setminus \partial(\{s, t\})$ are precisely the edges of $\partial U'$ originally within $G$, and the edges in $\partial(\{s\}) \setminus \partial U'$ are precisely the edges between $s$ and $U \cap T$. Since $\partial U'$ is an $s-t$ min-cut, we have $w(\partial U' \setminus \partial(\{s, t\})) \le w(\partial(\{s\}) \setminus \partial U')$, which is equivalent to $w_G(\partial U) \le |U \cap T|\cdot\delta\cdot \kappa$. A symmetric argument yields $w_G(\partial U)\le|\overline U\cap T|\cdot\delta\cdot\kappa$, and combining the two proves the lemma.
\end{proof}

\begin{figure}
\includegraphics[width=9cm]{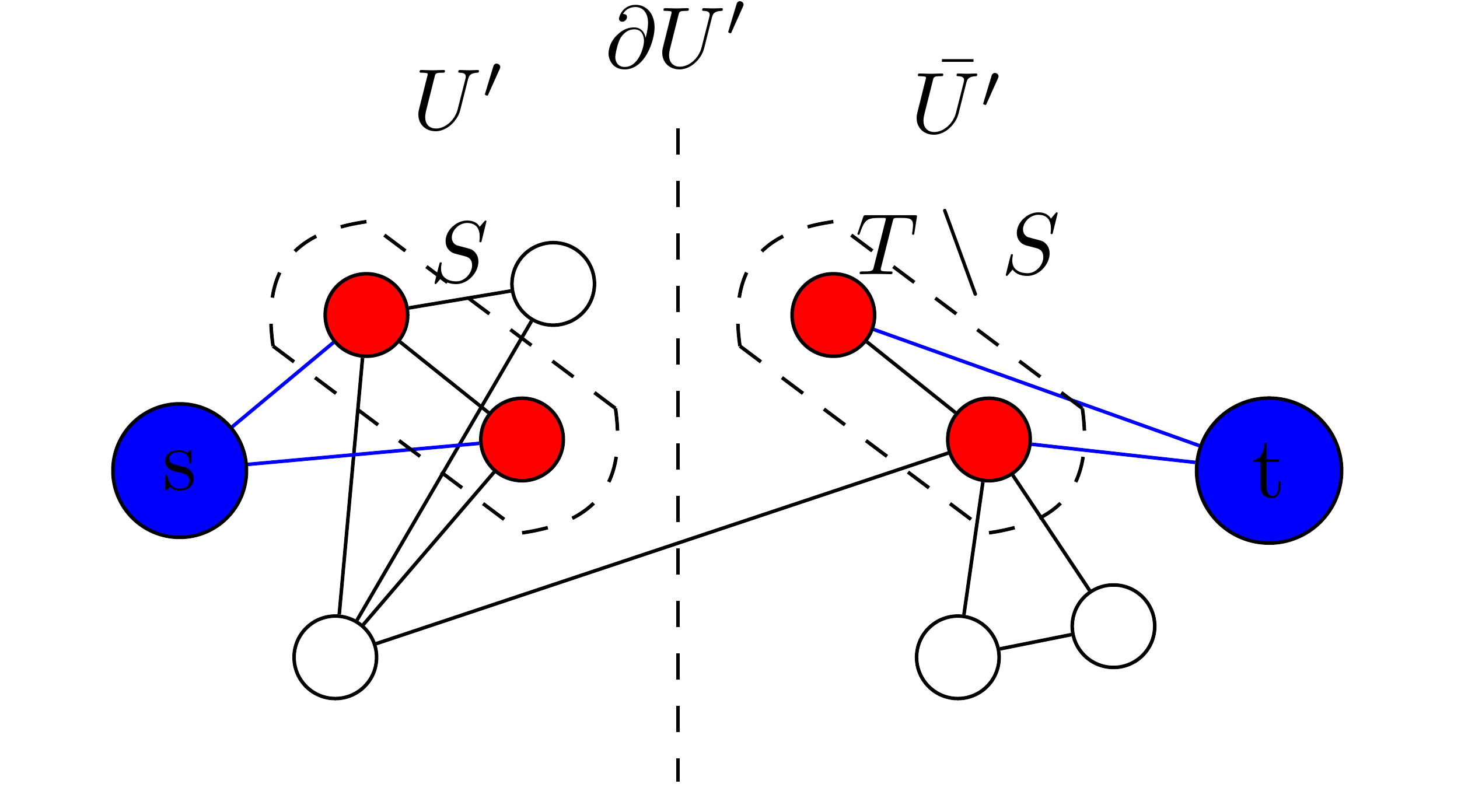}
\centering
\caption{Construction of \cref{alg:trimming}. Blue vertices and edges are added to the original graph $G$ and red vertices mark terminals.}
\end{figure}

\cref{lemma:sparsity} proves the sparsity guarantees of \cref{lemma:cut_game_guarantee}, as the algorithm sets either sets $\kappa \gets \psi$ or $\kappa \gets \min\{\gamma/2s,\gamma/6\}) \ll \psi$ in every \textsc{CutOrFlow} call. Thus, every cut returned is always $\psi \cdot \delta$-terminal sparse in $G$.

\begin{lemma}\label{lemma:constant_size}
If $|S|,|T\setminus S|\ge|T|/3$ and flow $f$ has value less than $|T|/6 \cdot \delta \cdot \kappa$, then the cut $(U, \overline U)$ satisfies $|U|, |\overline U| \ge |T|/6$.
\end{lemma}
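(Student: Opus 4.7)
The plan is to analyze the $s$-$t$ cut $(U',\overline{U'})$ produced inside the auxiliary graph $G'$ and decompose its weight into three disjoint contributions. Recall from the construction that $s$ is joined to every terminal in $S$ with capacity $\delta\kappa$, and $t$ is joined to every terminal in $T\setminus S$ with capacity $\delta\kappa$; moreover $s\in U'$, $t\in\overline{U'}$, and $(U,\overline U)=(U'\setminus\{s\},\,\overline{U'}\setminus\{t\})$. By max-flow min-cut duality, the hypothesis gives $w_{G'}(\partial U')<\tfrac{|T|}{6}\cdot\delta\kappa$.

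Next, I would partition the edges of $\partial U'$ according to whether they touch $s$, touch $t$, or lie entirely inside $G$. The $s$-incident edges crossing the cut are exactly those from $s$ to terminals of $S$ on the $\overline U$ side, contributing $|S\cap\overline U|\cdot\delta\kappa$. Symmetrically, the $t$-incident edges contribute $|(T\setminus S)\cap U|\cdot\delta\kappa$. The remaining edges are nonnegative, so the first contribution alone satisfies
\[
|S\cap\overline U|\cdot\delta\kappa \;\le\; w_{G'}(\partial U') \;<\; \tfrac{|T|}{6}\cdot\delta\kappa,
\]
yielding $|S\cap\overline U|<|T|/6$, and similarly $|(T\setminus S)\cap U|<|T|/6$.

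Finally, I would combine these bounds with the hypothesis $|S|,|T\setminus S|\ge|T|/3$. Since $S\subseteq V=U\sqcup\overline U$, we have $|S\cap U|=|S|-|S\cap\overline U|>|T|/3-|T|/6=|T|/6$, and an identical computation gives $|(T\setminus S)\cap\overline U|>|T|/6$. Using $|U\cap T|\ge|S\cap U|$ and $|\overline U\cap T|\ge|(T\setminus S)\cap\overline U|$ then proves the claim (in the stronger form involving terminal counts, which is what is actually used downstream in \cref{lemma:cut_game_guarantee}).

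There is essentially no technical obstacle here: once one correctly identifies which side of the cut holds $s$ and $t$ and which edges of $\partial U'$ are charged to which source/sink terminals, the bound is a one-line accounting argument. The only place where care is needed is making sure the decomposition of $w_{G'}(\partial U')$ is disjoint and that the capacities $\delta\kappa$ are used uniformly on all $s$- and $t$-edges, which is immediate from the definition of \textsc{CutOrFlow}.
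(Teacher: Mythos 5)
Your proof is correct and follows essentially the same approach as the paper: apply flow--cut duality to bound $w_{G'}(\partial U')$, observe that the $s$-to-$S\cap\overline{U'}$ and $t$-to-$(T\setminus S)\cap U'$ edges each cross the cut and contribute $\delta\kappa$ apiece, deduce $|S\cap\overline U|,|(T\setminus S)\cap U|<|T|/6$, and combine with $|S|,|T\setminus S|\ge|T|/3$. The only (harmless) difference is that you are explicit about the disjoint decomposition of $\partial U'$ and about concluding via terminal counts, which the paper leaves implicit in the final ``In particular.''
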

\begin{proof}
In graph $G'$, the value of flow $f'$ and cut $(U', \overline{U'})$ are equal by flow-cut duality. In particular, cut $(U', \overline{U'})$ has weight less than $|T|/6 \cdot \delta \cdot \kappa$. Since $s$ has edges to $S\cap\overline{U'}$ that cross the cut, and since $t$ has edges from $(T\setminus S)\cap U'$ that cross the cut, we have $|S\cap\overline{U'}|,|(T\setminus S)\cap U'|\le|T|/6$. Since $|S|,|T\setminus S|\ge|T|/3$, it follows that $|S\cap U'|=|S|-|S\cap\overline{U'}|\ge|T|/6$ and $|(T\setminus S)\cap\overline{U'}|=|T\setminus S|-|(T\setminus S)\cap U'|\ge|T|/6$. In particular, $|U|, |\overline U| \ge |T|/6$.
\end{proof}

\subsubsection{Trimming}\label{subsubsec:trimming}
In \Cref{line:trimming} of \cref{alg:cut-game}, we begin with a subset $C\subseteq T$ of size at least $2|T|/3$ such that $V$ is $(s,\delta,\gamma,C)$-terminal-strong in $G$. Our next goal is to find a cluster $U$ that is a $(O(s/\gamma),\delta,\Omega(\gamma/s))$-terminal strong with $|U\cap T|\ge|T|/3$. This allows us to only recurse on $\overline U$, which satisfies $|\overline U\cap T|\le2|T|/3$, allowing for an efficient algorithm.

We used a modified form of the trimming method found in \cite{expander-pruning}. In their paper, the authors describe a simple ``Slow Trimming'' and an improved ``Efficient Trimming'' scheme, which is much more involved by circumventing the use of exact max-flow. However, the slow trimming scheme suffices for our purposes since we are fine with maximum flow time.

We begin with the following lemma, which we use to prove the correctness of trimming:

\begin{lemma} \label{lem:condition1_induced_expander}
    If a cluster $S$ is $(s, (L_{\max}/\psi)\delta, \gamma)$-strong in the cut graph $H$, then $V$ is $(s, \delta, \gamma, S)$-terminal-strong in $G$.
\end{lemma}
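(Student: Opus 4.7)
The plan is to relate every low-weight cut in $G$ to a corresponding cut in the cut-graph $H$ via the matching player's flow-path construction, and then invoke the $(s, (L_{\max}/\psi)\delta, \gamma)$-strength hypothesis on $H$. Concretely, fix any Steiner cut $(C, \overline{C})$ in $G$ with respect to $S$ satisfying $w_G(\partial_G C) \le \delta$, and consider the induced cut $(C \cap T, \overline{C} \cap T)$ on the vertex set of $H$.

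The key step is to show
\[
w_H(\partial_H(C \cap T)) \;\le\; \frac{L_{\max}}{\psi} \cdot w_G(\partial_G C) \;\le\; \frac{L_{\max}}{\psi} \cdot \delta.
\]
Every edge of $H$ is inserted in some matching round (\cref{line:add-edges} of \cref{alg:cut-game}) as part of a path decomposition of a feasible $G$-flow $f_t$, with $H$-weight equal to $1/\psi$ times the capacity of the associated $G$-path. An edge $(t_1, t_2) \in H$ crosses $(C \cap T, \overline C \cap T)$ iff its defining path crosses $(C, \overline C)$ in $G$. Since $f_t$ is a feasible flow in $G$, the total capacity of paths of $f_t$ crossing $(C, \overline C)$ is at most $w_G(\partial_G C)$; summing over the at most $L_{\max}$ matching rounds and multiplying by $1/\psi$ yields the bound.

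Now I apply $(s, (L_{\max}/\psi)\delta, \gamma)$-strength of $S$ in $H$ to the cut $(C \cap T, \overline C \cap T)$. Since $S \subseteq T$, we have $|C \cap T \cap S| = |C \cap S|$ and similarly for $\overline C$, so $\min\{|C \cap S|, |\overline C \cap S|\} \le s$ follows directly; this is precisely property~1 of $(s, \delta, \gamma, S)$-terminal-strength of $V$ in $G$ (noting that $G[V] = G$ here). For property~2, suppose both $|C \cap S|$ and $|\overline C \cap S|$ are positive. Then strength in $H$ forces $w(\partial_{H[S]}(C \cap T)) \ge \gamma \cdot (L_{\max}/\psi) \cdot \delta$; combined with the containment $\partial_{H[S]}(C \cap T) \subseteq \partial_H(C \cap T)$ and the cut-transfer bound above, this gives $w_G(\partial_G C) \ge \gamma \delta$, as required. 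The only real subtlety is the cut-transfer bound, which hinges on the scaling factor $1/\psi$ chosen in \cref{line:add-edges} and on $f_t$ being a feasible flow in $G$ so that per-round cut-crossing path capacity is bounded by $w_G(\partial_G C)$; the remainder is a direct invocation of the strength definitions.
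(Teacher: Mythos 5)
Your proof is correct and follows essentially the same route as the paper's: you pass from the $G$-cut to the induced cut $(C\cap T,\overline C\cap T)$ in $H$, bound its $H$-weight by $(L_{\max}/\psi)\cdot w_G(\partial_G C)$ using the matching-round flow construction, and then read off both the $s$-bound and the $\gamma$-bound from the $(s,(L_{\max}/\psi)\delta,\gamma)$-strength of $S$ in $H$. The only cosmetic difference is that the paper phrases the cut-transfer bound in terms of simultaneously routing all matching flows in $G$ with capacities scaled by $L_{\max}/\psi$ and states the two conclusions by contradiction, whereas you sum a per-round feasibility bound and argue directly; these are the same argument.
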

\begin{proof}
    By construction, each edge $(u,v)$ of weight $w$ in the cut graph $H$ certifies the existence of a flow of capacity $1/\psi \cdot w$ in the original graph $G$. Since we run the cut-game for at most $L_{\max}$ rounds, we can simultaneously route flows between terminals $u$ and $v$ of weight $1/\psi \cdot w(u,v)$ for all edges $(u,v)\in E_H$ with capacities scaled by at most $L_{\max}$ in graph $G$. Equivalently, scaling everything by $1/\psi$, we can simultaneously route flows between terminals $u$ and $v$ of weight $w(u,v)$ for all edges $(u,v)\in E_H$ with capacities scaled by at most $L_{\max}\cdot1/\psi$ in graph $G$.
    
    We proceed with two cases. First, assume for contradiction that there exists a Steiner cut $(C, \overline{C})$ of graph $G$ with at most weight $\delta$ which satisfies $\min\{|C\cap S|, |\overline{C} \cap S|\} > s$. Consider cut $(C\cap T, \overline C\cap T)$ in cut graph $H$. Since $C\cap S \subseteq C \cap T$ and $\overline{C} \cap S \subseteq \overline C\cap T$, we have
    \begin{equation}
        \min\{|(C\cap T) \cap S|, |(\overline C\cap T) \cap S|\} \ge \min\{|C\cap S|, |\overline{C} \cap S|\} > s.
    \end{equation}
    The weight of cut $(C\cap T, \overline C\cap T)$ in $H$ is at most a $L_{\max}\cdot1/\psi$ factor greater than the amount of (scaled) flow able to be routed over cut $(C, \overline{C})$ in graph $G$. In other words, $w_H(C\cap T, \overline C\cap T) \leq (L_{\text{max}}\cdot1/\psi)\delta$. This contradicts the assumption that $S$ is an $(s, (L_{\max}/\psi)\delta, \gamma)$-strong cluster in the cut graph.

    For the second case, assume for contradiction that there exists a cut $(C, \overline{C})$ of graph $G$ with at most weight $\delta$ which satisfies $w(\partial_G C) < \gamma \cdot \delta$ and $\min\{|C\cap S|, |\overline{C} \cap S|\} > 0$. Similar to above, the weight of cut $(C\cap T, \overline C\cap T)$ in $H$ is at most a $L_{\text{max}}\cdot1/\psi$ factor larger than cut $(C, \overline{C})$ in graph $G$. Since $\min\{|C\cap S|, |\overline{C} \cap S|\} > 0$, $\partial_{H[S]} C$ is an actual cut of $H[S]$, and $\partial_{H[S]} C \leq \partial_H C < \gamma \cdot (L_{\max}/\psi)\delta$, contradicting the assumption that $S$ is $(s, (L_{\max}/\psi)\delta, \gamma)$-strong in $H$.
\end{proof}

Now, we introduce the section's main theorem, which shows that the cluster $\overline U$ is terminal-strong and contains a large fraction of terminals. 

\begin{theorem}\label{theorem:trimming_main}
    If $V$ is $(s, \delta, \gamma, S)$-terminal-strong in $G$ and $|S| \geq 2|T|/3$, the cut $(U, \overline U)$ returned by \cref{alg:trimming} with parameter $\kappa=\min\{\gamma/(2s),\gamma/6\}$ satisfies the property that $U$ is $(\max\{2/\kappa+s,3s\}, \delta, \kappa, U\cap T)$-terminal strong in $G$ and $|U\cap T|\geq|T|/3$.
\end{theorem}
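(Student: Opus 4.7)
The plan is to analyze the $s$-$t$ min-cut $(U',\overline{U'})$ produced by \textsc{CutOrFlow} by probing its optimality against two local modifications: moving $A=C_0\cap U$ or $\overline A=\overline{C_0}\cap U$ from the source side to the sink side of the cut. Tracking how $G$-edges, the auxiliary $s$-edges (of weight $\delta\kappa$ each to $S$), and the auxiliary $t$-edges (of weight $\delta\kappa$ each from $T\setminus S$) change, the nonnegative-change condition from min-cut optimality yields two inequalities
\[
e \ \ge\ w\bigl(E(A,\overline U)\bigr) + (b-a)\delta\kappa, \qquad e\ \ge\ w\bigl(E(\overline A,\overline U)\bigr) + (b'-a')\delta\kappa,
\]
where $e=w(\partial_{G[U]}A)$ and $a,b,a',b'$ count $S$-terminals and $(T\setminus S)$-terminals on each side within $U$. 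Both the cardinality and the boundary parts of terminal-strength will be reduced to these inequalities together with the $(s,\delta,\gamma,S)$-strength of $V$ in $G$.

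For $|U\cap T|\ge|T|/3$, I would upper bound the min-cut value by the trivial cut isolating $t$, whose weight is $|T\setminus S|\delta\kappa\le|T|\delta\kappa/3$; expanding the min-cut value as $w(\partial_G U)+(|\overline U\cap S|+|U\cap(T\setminus S)|)\delta\kappa$ and dropping the nonnegative first term gives $|\overline U\cap S|\le|\overline U\cap(T\setminus S)|\le|T|/3$, hence $|\overline U\cap T|\le 2|T|/3$.

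For the cardinality clause, fix any Steiner cut $(C_0,\overline{C_0})$ of $G$ of weight at most $\delta$. The $(s,\delta,\gamma,S)$-strength of $V$ gives $\min\{|C_0\cap S|,|\overline{C_0}\cap S|\}\le s$; WLOG $a\le s$. Combining the first displayed inequality with $e\le w(\partial_G C_0)\le\delta$ yields $b-a\le 1/\kappa$, so $|A\cap T|=a+b\le 2s+1/\kappa\le 2/\kappa+s$, where the last step uses $s\le 1/\kappa$ (a consequence of $\kappa\le\gamma/(2s)\le 1/(2s)$). This is at most $\max\{2/\kappa+s,3s\}$, so the minimum across the two sides is as well. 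The symmetric case, when the other side of $C_0$ has fewer $S$-terminals, is handled identically using the second inequality.

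The boundary clause splits into two subcases. If $a=0$ (symmetrically $a'=0$) while $|A\cap T|>0$, then $b\ge 1$ and the first inequality already gives $e\ge\delta\kappa$. The substantive case, and the main obstacle of the proof, is $a,a'\ge 1$: here $(C_0,\overline{C_0})$ separates $S$-terminals, so $S$-strength of $V$ forces $w(\partial_G C_0)\ge\gamma\delta$ --- but this is a bound on the \emph{global} boundary, not on the intra-cluster $e$. To transfer from one to the other, I would introduce the auxiliary Steiner cut $(A,V\setminus A)$ of $G$; its weight is $e+w(E(A,\overline U))$, which by the first displayed inequality is at most $2e+s\delta\kappa$. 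Assuming $e<\kappa\delta$ for contradiction makes this auxiliary weight strictly less than $(s+2)\kappa\delta$, and the choice $\kappa=\min\{\gamma/(2s),\gamma/6\}$ is calibrated precisely so that $(s+2)\kappa<\gamma$ (a short case check on $s\ge 3$ versus $s\le 2$), so the auxiliary cut has weight strictly below $\gamma\delta\le\delta$ while still separating $S$-terminals --- contradicting the $\gamma$-clause of $(s,\delta,\gamma,S)$-strength of $V$.
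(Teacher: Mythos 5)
Your proof is correct, and it takes a genuinely different route from the paper's. The paper argues on the \emph{flow} side of duality: it splits into the cases $U=V$ and $U\subsetneq V$, uses the fact that the max-flow saturates $E(U,\overline U)$ (resp.\ the edges into $t$) to track how much flow must cross the cut $(C_0,\overline{C_0})$, and for $U\subsetneq V$ it first proves an intermediate lemma that $U$ is $(s,\delta,\gamma/6,S\cap U)$-terminal-strong and that $|S\cap U|\ge 2|T\cap U|/3$, then reduces back to the $U=V$ case on $G[U]$. You instead work on the \emph{cut} side, deriving exchange inequalities from the optimality of $(U',\overline{U'})$ when $A=C_0\cap U$ or $\overline A=\overline{C_0}\cap U$ is pushed to the sink side; this gives $e\ge w(E(A,\overline U))+(b-a)\delta\kappa$ and its twin in one shot, with no case split on whether $U=V$. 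Your key extra idea is the auxiliary Steiner cut $(A,V\setminus A)$ of $G$: bounding its global boundary by $2e+s\delta\kappa<(s+2)\kappa\delta<\gamma\delta$ and invoking the $\gamma$-clause of $(s,\delta,\gamma,S)$-strength of $V$ transfers the global lower bound $w(\partial_G C_0)\ge\gamma\delta$ down to the intra-cluster boundary $e$, which is exactly what the paper's Lemma~\ref{lem:condition2_induced_expander} achieves via a flow-pushing argument. Your version is arguably cleaner and more self-contained. One small slip: in the $|U\cap T|\ge|T|/3$ step the chain ``$|\overline U\cap S|\le|\overline U\cap(T\setminus S)|\le|T|/3$'' is not what your inequality gives; the correct reading is $|\overline U\cap S|+|U\cap(T\setminus S)|\le|T\setminus S|\le|T|/3$, which yields $|\overline U\cap S|\le|T|/3$, and separately $|\overline U\cap(T\setminus S)|\le|T\setminus S|\le|T|/3$, so $|\overline U\cap T|\le 2|T|/3$. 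The conclusion stands. Also note that under the constraint $s\le 1/\kappa$ (which you correctly derive from $\kappa\le\gamma/(2s)\le 1/(2s)$) the paper's bound $\max\{2/\kappa+s,3s\}$ collapses to $2/\kappa+s$, so matching only that term is fine.
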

We split our proof into two cases, when the cut $U = V$, and all other cuts.

\paragraph*{Case 1: $U = V$.}
Here, we will only use the bound $\kappa\le\gamma$. This fact will be important later in the proof.

By flow-cut duality, the flow $f$ in $G'$ sends full capacity along each edge into $t$. In particular, the value of the flow is equal to $|T \setminus S| \cdot\delta\cdot\kappa$. This case clearly satisfies $|U \cap T| = |T| \geq |T|/3$. 

We now show that in this case $U$ is $(\max\{2/\kappa+s,3s\}, \delta, \kappa, U\cap T)$-terminal strong. Consider an arbitrary Steiner cut $(C, \overline C)$ in $G$ of size $<\delta$. Suppose first that $\min\{|C\cap S|, |\overline{C} \cap S|\} = 0$, and assume without loss of generality that $C \cap S = \emptyset$. Each terminal in $C \cap (T \setminus S)$ sends full capacity into $t$ in the flow $f$, so at least $|C \cap (T \setminus S)|\cdot\delta\cdot \kappa$ flow must cross the cut $C$. Since $w(\partial C) < \delta$, we obtain $|C \cap (T \setminus S)|\cdot\delta\cdot \kappa<\delta$, so $|C \cap T| = |C \cap (T \setminus S)| < 1/\kappa$. Additionally since $|C \cap (T \setminus S)| \ge 1$, we have the total flow being at least $\kappa \cdot\delta$, and therefore the cut is at least this size as well.

Suppose now that $\min\{|C\cap S|, |\overline{C} \cap S|\} > 0$. From \cref{lem:condition1_induced_expander}, we know that $\min\{|C\cap S|, |\overline{C} \cap S|\} \le s$ and $w(\partial C) \geq \gamma \cdot \delta \geq \kappa\cdot\delta$. Assume without loss of generality that $|C\cap S| \leq |\overline{C} \cap S|$. We consider two cases:
\begin{enumerate}
    \item Case 1a: $|C\cap(T\setminus S)| \geq 2|C\cap S|$.

    Recall that the $s-t$ flow $f$ has value $|T\setminus S|\cdot\delta\cdot \kappa$ in graph $G'$. In this flow, at most $|C\cap S|\cdot\delta\cdot \kappa$ of the flow initially routed into $C\cap(T\setminus S)$ from $s$ can reach $t$ without crossing cut $C$. The remaining flow must therefore cross cut $C$. Therefore we have 
    \begin{equation}
         \delta > w(\partial C) \geq (|C\cap(T\setminus S)|-|C\cap S|)\cdot\delta\cdot \kappa \geq |C\cap(T\setminus S)|/2\cdot\delta\cdot\kappa.
    \end{equation}

    Therefore $|C\cap(T\setminus S)| < 2/\kappa$, and thus $|C \cap T| = |C\cap(T\setminus S)| + |C\cap S| < 2/\kappa + s$.

    \item Case 1b: $|C\cap(T\setminus S)| < 2|C\cap S|$.
    
    We have $|C\cap(T\setminus S)| < 2|C\cap S| \leq 2s$, so $|C \cap T| = |C\cap(T\setminus S)| + |C\cap S| \leq 3s$.
\end{enumerate}

This completes the proof of case 1 of \cref{theorem:trimming_main} when $U = V$.

\paragraph*{Case 2: $U \subsetneq V$.}
We begin by showing $|U\cap T|\geq|T|/3$. If this was not the case, since we assume $|S|\ge2|T|/3$, more than $|T|/3$ terminals in $S$ would be in $\overline U$. All of these terminals would have an edge of size $\delta\cdot\kappa$ crossing the cut $\partial_{G'} U$. However, the $s-t$ cut $\partial_{G'} U$ must have size at most $|T\setminus S|\cdot\delta\cdot\kappa$ in graph $G'$. Since $|T\setminus S| \leq |T|/3$, we arrive at a contradiction. 

Now we prove the terminal-strong property of $G[U]$. We start by proving the following lemma:
\begin{lemma}\label{lem:condition2_induced_expander}
    Assume that $V$ is $(s, \delta, \gamma, S)$-terminal-strong in $G$. Then $U$ is $(s, \delta, \gamma/6, S\cap U)$-terminal-strong in $G$.
\end{lemma}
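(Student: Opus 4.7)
The plan is to verify both conditions of $(s,\delta,\gamma/6,S\cap U)$-terminal-strength from the definition. Fix an arbitrary Steiner cut $(C,\overline C)$ of $G$ with $w(\partial_G C)\le\delta$. The cardinality condition $\min\{|C\cap S\cap U|,|\overline C\cap S\cap U|\}\le s$ follows immediately from the assumed $(s,\delta,\gamma,S)$-terminal-strength of $V$, since $C\cap S\cap U\subseteq C\cap S$ and $\overline C\cap S\cap U\subseteq \overline C\cap S$. The substantive task is to establish $w(\partial_{G[U]}C)\ge (\gamma/6)\delta$ whenever both $C\cap S\cap U$ and $\overline C\cap S\cap U$ are nonempty.

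Invoking the strength of $V$ once more, I may assume without loss of generality that $|C\cap S|\le s$ (the symmetric case uses $\overline C\cap U$ everywhere in place of $C\cap U$). I then combine two inequalities. The first is a \emph{cut-extension} step: the cut $(C\cap U,\,V\setminus(C\cap U))$ of $G$ is Steiner because both sides contain an element of $S$, so terminal-strength of $V$ forces its weight to be at least $\gamma\delta$ (directly if the weight is $\le\delta$, and trivially otherwise). Rewriting $w(\partial_G(C\cap U))=w(\partial_{G[U]}C)+w(C\cap U,\overline U)$ gives
\[
w(\partial_{G[U]}C)+w(C\cap U,\overline U)\ge \gamma\delta.
\]
The second is a \emph{min-cut push}: because $(U',\overline{U'})$ is an $s$--$t$ min-cut of $G'$, moving $X=C\cap U$ from $U'$ to $\overline{U'}$ cannot decrease the cut weight. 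Tallying the added edges (those of $\partial_{G[U]}C$ together with the $\delta\kappa$-weighted edges from $s$ to $X\cap S$) against the removed edges (those from $X$ to $\overline U$ and the $\delta\kappa$-weighted edges from $X\cap(T\setminus S)$ to $t$), and dropping the nonnegative $(T\setminus S)$ contribution, yields
\[
w(\partial_{G[U]}C)\ge w(C\cap U,\overline U)-|C\cap U\cap S|\,\delta\kappa.
\]
Using $|C\cap U\cap S|\le |C\cap S|\le s$ together with $\kappa\le \gamma/(2s)$ bounds the correction by $\gamma\delta/2$, and adding this to the cut-extension inequality cancels $w(C\cap U,\overline U)$ to give $2\,w(\partial_{G[U]}C)\ge \gamma\delta/2$, which proves the claim (in fact with the stronger constant $\gamma/4$).

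The main obstacle I foresee is that the global sparsity bound on $w(\partial_G U)$ from \Cref{lemma:sparsity} scales with $|T\setminus S|$ and so cannot control the weight of $\partial_G C$ lost to $\overline U$ in absolute terms. The resolution is to use the min-cut optimality of $(U',\overline{U'})$ \emph{locally} by moving only $X=C\cap U$; the resulting correction then scales with $|C\cap U\cap S|\le s$, which is exactly why the parameter choice $\kappa\le \gamma/(2s)$ baked into \textsc{CutOrFlow} is strong enough to yield the required $\gamma\delta/6$ lower bound.
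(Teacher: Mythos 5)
Your proof is correct and reaches the required bound, indeed with the slightly stronger constant $\gamma/4$. The first half --- the ``cut-extension'' inequality $w(\partial_{G[U]}C)+w(E(C\cap U,\overline U))\ge\gamma\delta$, obtained by applying the $(s,\delta,\gamma,S)$-terminal-strength of $V$ to the Steiner cut $(C\cap U,\,V\setminus(C\cap U))$ --- is exactly the paper's first step. The second half is a genuinely different route: where you invoke the \emph{min-cut optimality} of $(U',\overline{U'})$ (an exchange argument: moving $X=C\cap U$ from the source side to the sink side cannot decrease the cut weight, giving $w(\partial_{G[U]}C)\ge w(E(C\cap U,\overline U))-|C\cap U\cap S|\,\delta\kappa$ directly), the paper instead invokes \emph{flow conservation} on the max-flow $f$ restricted to $C\cap U$: since $f$ saturates every edge of $\partial U$ in the outward direction, more than $\tfrac{5\gamma}{6}\delta$ flow must exit $C\cap U$ into $\overline U$, while only $<\tfrac{\gamma}{6}\delta$ can enter from $\overline C\cap U$ and at most $|C\cap U\cap S|\,\delta\kappa\le\tfrac{\gamma}{2}\delta$ can enter from $s$, a contradiction. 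These are dual faces of the same \textsc{CutOrFlow} computation --- the paper reasons on the primal flow, you reason on the dual cut. Your version avoids the contradiction framing and the need to argue about saturation directions of $\partial U$, and makes the role of $\kappa\le\gamma/(2s)$ (canceling at most $s$ source edges against the $\gamma\delta$ slack) somewhat more transparent.
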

\begin{proof}
    Note that the condition $\min\{|C \cap U \cap S|, |\overline C \cap U \cap S|\} \le s$ follows immediately from $\min\{|C \cap S|, |\overline C \cap S|\} \le s$ since $G$ is $(s, \delta, \gamma, S)$-terminal-strong. So it suffices to prove that for all Steiner cuts $(C, \overline C)$ such that $\partial C \leq \delta$, we have $w(E(C \cap U, \overline C \cap U))\geq \gamma/6 \cdot\delta$ if $\min\{|C \cap U \cap S|, |\overline C \cap U \cap S|\} > 0$. 

    By flow-cut duality, the flow $f$ saturates the entire boundary $E(U, \overline U)$. Assume for contradiction there exists a Steiner cut $(C, \overline C)$ such that $\partial C \leq \delta$, $w(E(C \cap U, \overline C \cap U)) < \gamma/6\cdot\delta$, and $\min\{|C \cap U \cap S|, |\overline C \cap U \cap S|\} > 0$. As mentioned before, we know that $\min\{|C \cap U \cap S|, |\overline C \cap U \cap S|\} \le s$, so assume without loss of generality that $|C\cap U\cap S|\le s$. We also have $w(\partial(C\cap U)) = w(E(C \cap U, \overline C \cap U)) + w(E(C \cap U, V\setminus U)) \geq \gamma\cdot\delta$ since $G$ is $(s, \delta, \gamma, S)$-terminal-strong. This implies $w(E(C\cap U, V\setminus U)) > 5\gamma/6\cdot\delta$. The flow $f$ sends $>5\gamma/6\cdot\delta$ flow from $C\cap U$ to $V \setminus U$, and only $<\gamma/6\cdot\delta$ flow can enter $C\cap U$ from $\overline C \cap U$. Therefore $>2\gamma/3\cdot\delta$ flow must be routed from the source node $s$ into $C\cap U$. But this flow is upper bounded by $|C \cap U \cap S| \cdot \delta \cdot \kappa \leq s\cdot\delta\cdot\gamma/(2s)\le \gamma/2\cdot\delta < 2\gamma/3\cdot\delta$ (using our assumption $\kappa\le\gamma/(2s)$ from \Cref{theorem:trimming_main}), and we arrive at our contradiction.
\end{proof}

Next, we show that $|S \cap U| \geq 2|T \cap U|/3$. For each terminal from $S$ in $V \setminus U$ and each terminal from $T \setminus S$ in $U$, there exists an edge of weight $\delta\cdot 1/\kappa$ crossing $\partial U$ in $G'$. Denote $|S \cap (V\setminus U)|=a$ and $|(T \setminus S)\cap U|=b$. Since the cut $(U,\overline U)$ has weight $<|T\setminus S|\cdot\delta\cdot\kappa\leq |T|/3\cdot\delta\cdot\kappa$, we have $a+b<|T|/3$. Additionally $|S| \geq 2|T|/3$, so 
\begin{equation}
    \frac{|(T\setminus S)\cap U|}{|S\cap U|} = \frac b{|S|-a} \leq \frac{a+b}{|S|} < \frac{|T|/3}{2|T|/3} \leq 1/2,
\end{equation}
proving the requirement as desired.

At this point, we can apply the $U=V$ case on $G[U]$, since we have shown that $G[U]$ is $(s, \delta, \gamma/6, S\cap U)$-terminal-strong and $|S \cap U| \geq 2|T \cap U|/3$, and the $U=V$ case only requires that $\kappa\le\gamma/6$. This completes the proof of case 2 of \cref{theorem:trimming_main} when $U \subsetneq V$.

\subsubsection{Final Parameters}
Finally, we plug in our parameters $L_{\max}=O(\log|T|)$, $\alpha=L_{\max}/\psi$, $s=O((L_{\max}/\psi)^2\log^2n)$ $=O(\log^4n/\psi^2)$, and $\gamma=\frac1{200\alpha s}=\Omega(\psi^3/\log^5n)$ in \Cref{alg:cut-game}. We have $\kappa=\Omega(\gamma/s)=\Omega(\psi^5/\log^9n)$, so the $(\max\{2/\kappa+s,3s\}, \delta, \kappa, U\cap T)$-terminal strong cluster $U$ output by \Cref{alg:cut-game} is \\$(O(\log^9n/\psi^5),\delta,\Omega(\psi^5/\log^9n))$-terminal-strong, fulfilling the output guarantee of \Cref{alg:cut-game}.

\subsection{Termination}
\label{subsec:termination}
To show that the cut-matching game terminates within $L_{\max}$ rounds, we introduce the following guarantee of the cut-matching game analysis.
\begin{lemma}\label{lemma:cut_graph_expander}
    For large enough $L_{\max}=O(\log|T|)$, \cref{alg:cut-game} proceeds for at most $L_{\max}$ iterations.
\end{lemma}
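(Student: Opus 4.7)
The plan is to adapt the standard Khandekar--Rao--Vazirani potential analysis of the cut-matching game to the weighted bipartition setting of \cref{alg:cut-game}. In any iteration $t$ that does not terminate, the $(s, \alpha\delta, \gamma)$-strong decomposition from \cref{lemma:partition} applied to $H_{t-1}$ produces no cluster of size $> 2|T|/3$, so the bipartition $(C,\overline C)$ constructed via \cref{claim:large_clusters} has $|C|, |\overline C| \ge |T|/3$ and its boundary weight in $H_{t-1}$ is at most the total inter-cluster weight, which \cref{lemma:partition} bounds by $|T|\delta/50$. Simultaneously, the matching player routes a flow of value $\ge |T|\delta\psi/6$, so the matching inserted in \cref{line:add-edges} contributes, after the $1/\psi$ scaling, at least $|T|\delta/6$ weight to $H_t$ across $(C,\overline C)$. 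Hence each non-terminating round adds a matching whose weight, across a roughly balanced bipartition, is a constant factor larger than that bipartition's current boundary in $H_{t-1}$.

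I would track progress via a KRV-style $\ell_2^2$ potential
\[
\Phi_t \;=\; \sum_{v \in T}\, \bigl\lVert \mathbf{p}_v^{(t)} - \mathbf{u} \bigr\rVert_2^2,
\]
where $\mathbf{u}$ is the uniform distribution on $T$ and $\mathbf{p}_v^{(t)}$ is obtained by placing unit mass at $v$ and, along each matching edge added so far, averaging the two endpoints at a rate proportional to the edge weight normalized by $|T|\delta$. Standard calculations give $\Phi_0 \le |T|$. A Rayleigh-quotient/projection argument (as in the Orecchia--Vishnoi variant of KRV) shows that averaging a matching of weight $\Omega(|T|\delta)$ across a $(|T|/3, 2|T|/3)$-balanced cut drives $\Phi_t$ down by a multiplicative $1 - \Omega(1)$ factor per round. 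Consequently $\Phi_{L_{\max}}$ falls below any fixed inverse polynomial in $|T|$ after $L_{\max} = O(\log|T|)$ rounds.

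Once $\Phi_t$ is that small, the distributions $\mathbf{p}_v^{(t)}$ are all close enough to uniform that $H_t$ cannot simultaneously admit a $(|T|/3, 2|T|/3)$-balanced cut of boundary weight $\le |T|\delta/50$: any such cut would force some $\mathbf{p}_v^{(t)}$ to deviate noticeably from $\mathbf{u}$, contradicting the potential bound. Therefore in round $L_{\max}$ the cut player must return a cluster of size $> 2|T|/3$ (triggering the trimming branch in \cref{line:trimming}) or the matching player must return a small cut (triggering the terminal-balanced-cut branch), either of which terminates the game. The main obstacle is making the per-round drop fully rigorous in the weighted, approximately balanced setting: the original KRV analysis uses unit-weight perfect matchings on perfect bisections, whereas here edges are fractional and the bipartition is only $(|T|/3,2|T|/3)$-balanced, so one must extend the projection calculation to weighted matchings, crucially exploiting that the newly matched weight $|T|\delta/6$ dominates the pre-existing boundary weight $|T|\delta/50$ by a constant factor.
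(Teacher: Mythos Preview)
Your high-level plan is sound and mirrors the paper's: track a random-walk potential on $T$, use that each non-terminating round provides a balanced bipartition $(C,\overline C)$ with $w_{H_{t-1}}(C,\overline C)\le |T|\delta/50$ together with a matching of weight $\ge|T|\delta/6$ across it, and conclude the potential must reach its limit within $O(\log|T|)$ rounds. The paper, however, carries this out with the \emph{entropy} potential $\Phi(t)=-\sum_{u,v}p_{u,v}(t)\log p_{u,v}(t)$ rather than your $\ell_2^2$ potential, and argues an \emph{additive} increase of $\Omega(|T|)$ per round (bounded above by $|T|\ln|T|$). The crucial step there is an inductive bound $\sum_{u\in C}q_u(t)<|T|/100$ where $q_u(t)=\sum_{v\in\overline C}p_{u,v}(t)$: this uses only the sparsity of $(C,\overline C)$ in $H_{t-1}$, which is exactly what the strong decomposition supplies. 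Markov's inequality and concavity of entropy then give the per-round gain directly.

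Your $\ell_2^2$ route is plausible, but the ``Rayleigh-quotient/projection'' justification is where I would push back. In the KRV/OV analyses you invoke, the multiplicative $(1-\Omega(1))$ drop is driven by the cut player \emph{choosing} the bipartition via a spectral or random projection of the $\mathbf{p}_v$ vectors themselves; the drop is a consequence of that choice. Here the cut player ignores the $\mathbf{p}_v$'s entirely and uses a strong decomposition, so a large matching across a balanced cut does not by itself force a multiplicative drop in $\Phi_t$: you additionally need that the $\mathbf{p}_v$ for $v\in C$ are quantitatively separated from those for $v\in\overline C$. That separation does follow from the sparsity of $(C,\overline C)$ in $H_{t-1}$ (morally the same $q_u$ bound the paper proves), but establishing it is the heart of the argument, not the side issue you flag at the end. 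If you replace the projection appeal with that separation lemma, your $\ell_2^2$ approach should go through; the entropy potential simply makes the dependence on ``cut is sparse in $H_{t-1}$'' more transparent and avoids the need to relate the drop back to $\Phi_t$ itself.
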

The proof is a direct adaptation of the cut-matching game analysis of \cite{cut-matching-game, single-commodity-flows}, and thus we leave the details to \cref{appendix:cut-game}.

\subsection{Terminal Decomposition}
Finally, we introduce the complete algorithm for terminal decomposition, which uses the cut-matching game algorithm as a key subroutine.

\begin{algorithm}
    \SetKwInOut{Input}{Input}
    \SetKwInOut{Output}{Output}
    
    \Input{Cluster $C\subseteq V$, terminal set $T\subseteq V$, decomposition parameters $\delta, \psi$}
    
    Run \textsc{Cut-Game($G[C], T, \delta, \psi$)}
            
    \uIf{$G[C]$ is certified as a terminal-strong cluster}{
        \Return $G[C]$
    }
    \uElseIf{Cut-Game returns balanced cut $(U, \overline{U})$ with terminal-sparsity $\psi\cdot\delta$}{
        \Return $\textsc{Terminal-Decomp}(U, U \cap T, \delta, \psi) \cup \textsc{Terminal-Decomp}(\overline{U}, \overline{U} \cap T, \delta, \psi)$ \label{line:recurse}
    }
    \Else{
    \tcp{Larger side $G[U]$ is certified as an $(s, \delta, \gamma)$-terminal-strong cluster}
        \Return $G[U] \cup \textsc{Terminal-Decomp}(\overline{U}, \overline{U} \cap T, \delta, \psi)$ 
    }
    \caption{\textsc{Terminal-Decomp($C, T, \delta, \psi$)}}
    \label{alg:terminal_decomposition}
\end{algorithm}

The algorithm uses \textsc{Cut-Game} as a subroutine and \cref{lemma:cut_game_guarantee} as its guarantee. First, we note that since we recurse on both sides of a cut in \cref{line:recurse} only if they both have at least $\Omega(1)$ terminals (from \cref{lemma:cut_game_guarantee}), we have at most $O(\log n)$ recursive levels. We now prove the formal theorems for our terminal decomposition.

\begin{theorem}
    \cref{alg:terminal_decomposition} runs with $O(\log^2 n)$ max-flows and $\Tilde O(m)$ additional time.
\end{theorem}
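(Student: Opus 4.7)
The plan is to analyze the additional running time and the number of max-flow calls separately, starting from the Cut-Game guarantee (\cref{lemma:cut_game_guarantee}) and a bound on the recursion depth of \cref{alg:terminal_decomposition}. First I would bound the recursion depth by $O(\log n)$: by \cref{lemma:cut_game_guarantee}, each invocation of Terminal-Decomp either recurses only on $\overline U$ with $|\overline U \cap T| \leq 2|T|/3$ (in the certified-cluster case), or recurses on both sides of a balanced terminal-sparse cut with $|U \cap T|, |\overline U \cap T| \geq |T|/6$, so both children have at most $5|T|/6$ terminals. Either way the maximum terminal count at deeper levels shrinks by a constant factor, giving depth $O(\log|T|) = O(\log n)$.

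For the additional running time, each invocation of Terminal-Decomp performs one Cut-Game call, which by \cref{lemma:cut_game_guarantee} costs $\tilde O(m)$ time outside of max-flow calls, where $m$ here denotes the edge count of the current subgraph $G[C]$. The subgraphs at any fixed recursion level live on vertex-disjoint clusters and therefore have disjoint edge sets, so the total additional work at each level is $\tilde O(m)$ in terms of the original edge count. Multiplying by the $O(\log n)$ depth yields the claimed $\tilde O(m)$ total additional time.

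For the max-flow count, each Cut-Game makes at most $L_{\max} = O(\log n)$ max-flow calls. A naive DFS execution would pay $O(\log n)$ max-flows per recursive node, which may be as large as $\Theta(|T|)$ when we repeatedly take balanced cuts. To obtain the $O(\log^2 n)$ bound, I would reorganize the execution in BFS order across levels and batch the max-flow calls from disjoint subproblems at each level: because the clusters at a given level are vertex-disjoint, the $i$-th \textsc{CutOrFlow} max-flow of all active subproblems can be combined into a single max-flow on the disjoint union of their auxiliary flow graphs, linked by an infinite-capacity super-source and super-sink, whose overall max-flow decomposes into the individual max-flows. This yields $O(\log n)$ batched max-flow calls per level, each on a graph with $O(n)$ vertices and $O(m)$ edges, for a total of $O(\log^2 n)$.

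The main subtlety I expect is justifying that this BFS reordering preserves the output of \cref{alg:terminal_decomposition}: different subproblems may terminate Cut-Game after different numbers of iterations (for example, one returning a balanced sparse cut early), so the batch has to drop completed subproblems between rounds, while iterations inside any one Cut-Game must still run sequentially since each round's matching depends on the previous round's cut graph $H$. Correctness of the batched execution rests on the observation that Cut-Game runs on disjoint induced subgraphs are completely independent, so their joint batched execution produces the same clusters as the original sequential DFS order.
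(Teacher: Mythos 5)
Your proof is correct and follows essentially the same route as the paper's: bound the recursion depth by $O(\log n)$ via constant-factor shrinkage of the terminal count, observe that subproblems at a fixed recursion level live on vertex-disjoint induced subgraphs, batch the $L_{\max}=O(\log n)$ max-flows per level into single calls on $O(m)$-edge graphs, and charge the $\tilde O(m)$ non-flow work per level across $O(\log n)$ levels. The only difference is that you spell out the BFS reordering and the need to drop completed subproblems between batched rounds, whereas the paper states the parallelization observation more tersely; both rest on the same disjointness argument.
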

\begin{proof}
    First, in each recursive level, each \textsc{Terminal-Decomp} call is on a mutually disjoint portion of the graph. Therefore, all maximum flows on a single recursive level can be done in parallel with a single maximum flow call on a graph of size $O(m)$ edges. Additionally, each round of the cut-matching game uses at most a single max-flow call (in \textsc{CutOrFlow}). With a total of $L_{\text{max}} = O(\log n)$ cut-matching game rounds and $O(\log n)$ recursive levels, the entire terminal decomposition runs in $O(\log^2 n)$ max-flows.

    All other cut-matching game procedures (specifically the $(s,\delta,\gamma)$-strong decomposition of \linebreak\cref{lemma:partition}) run in near-linear time. With $O(\log n)$ recursive levels, the entire algorithm runs in near-linear time, excluding max-flows.
\end{proof}
\begin{theorem}\label{thm:terminal-decomposition}
    \cref{alg:terminal_decomposition} returns a $(\Tilde{O}(1/\psi^5), \delta, \Tilde{\Omega}(\psi^5), T)$-terminal-strong decomposition of $G$.
\end{theorem}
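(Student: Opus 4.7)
The proof has two parts: verifying that each output cluster is $(\tilde O(1/\psi^5), \delta, \tilde\Omega(\psi^5), T)$-terminal-strong in the original graph $G$, and bounding the total weight of inter-cluster edges by $\tilde O(\delta |T|)$.

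For the cluster strength, I would observe that every output cluster $V_i$ is certified by some invocation \textsc{Cut-Game}$(G[C], T\cap C, \delta, \psi)$, where $V_i \subseteq C$. By \cref{lemma:cut_game_guarantee}, $V_i$ is $(O(\log^9 n/\psi^5), \delta, \Omega(\psi^5/\log^9 n), T\cap C)$-terminal-strong \emph{in} $G[C]$. The key step is to transfer this property from $G[C]$ up to $G$. Consider any Steiner cut $(D, \overline D)$ in $G$ with $w_G(\partial D) \le \delta$; the induced cut $(D \cap C, \overline D \cap C)$ in $G[C]$ has weight $w_{G[C]}(\partial (D \cap C)) \le w_G(\partial D) \le \delta$, because the self-loops used to preserve degrees never cross a cut. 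Moreover, since $V_i \subseteq C$, the subgraph $(G[C])[V_i]$ coincides with $G[V_i]$, and $V_i \cap T \cap C = V_i \cap T$. Applying the terminal-strength guarantee in $G[C]$ to this induced cut then yields $\min\{|D \cap V_i \cap T|, |\overline D \cap V_i \cap T|\} \le s$, and, whenever both sides are nonempty, $w(\partial_{G[V_i]} D) \ge \gamma \delta$, which are exactly the conditions required in $G$.

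For the inter-cluster edge bound, I would run a per-level charging argument over the recursion tree of \textsc{Terminal-Decomp}. The recursion depth is $O(\log n)$: in the balanced case both child clusters have at most $5|T|/6$ terminals (since $|U\cap T|,|\overline U\cap T|\ge|T|/6$), and in the unbalanced case the recursed side has at most $2|T|/3$ terminals. At each recursive call on a cluster $C$, any cut $(U, \overline U)$ that is added to the partition is $\psi\delta$-terminal-sparse by \cref{lemma:cut_game_guarantee}, hence has weight at most $\psi \delta \cdot |T \cap C|$. Since the active clusters at any fixed depth of the recursion are pairwise disjoint subsets of $V$, the values $|T \cap C|$ sum to at most $|T|$ at each depth, giving a per-level contribution of at most $\psi \delta |T|$. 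Summing over $O(\log n)$ depths yields total inter-cluster weight $O(\psi \delta |T| \log n) = \tilde O(\delta |T|)$.

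Finally, substituting the parameters $s = O(\log^9 n/\psi^5) = \tilde O(1/\psi^5)$ and $\gamma = \Omega(\psi^5/\log^9 n) = \tilde\Omega(\psi^5)$ from \cref{lemma:cut_game_guarantee} produces exactly the $(\tilde O(1/\psi^5), \delta, \tilde\Omega(\psi^5), T)$-terminal-strong decomposition claimed. The main conceptual subtlety is the transfer of terminal-strength from $G[C]$, where the cut-matching game actually operates, back to $G$; once this is justified via the self-loop convention for induced subgraphs (so that restricting a cut to $G[C]$ can only decrease its weight), both correctness and the weight bound follow from straightforward structural bookkeeping.
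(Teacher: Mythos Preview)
Your proposal is correct and follows essentially the same approach as the paper: invoke \cref{lemma:cut_game_guarantee} for the strength of each returned cluster, and bound the inter-cluster edge weight by a per-level charging argument over the $O(\log|T|)$-depth recursion, using that every cut produced is $\psi\delta$-terminal-sparse. The one place you go beyond the paper is in explicitly justifying the transfer of terminal-strength from $G[C]$ (where \textsc{Cut-Game} is invoked) back to $G$ via the self-loop convention; the paper leaves this step implicit, so your added care there is appropriate.
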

\begin{proof}
    We begin by proving the upper-bound on intercluster edges:
    \begin{lemma}
        The total weight of intercluster edges from the decomposition outputted by \cref{alg:terminal_decomposition} is at most $O(\psi\cdot\delta\cdot|T|\log |T|)$. 
    \end{lemma}
    \begin{proof}
        Every cut made by \cref{alg:terminal_decomposition} is $\psi\cdot\delta$ terminal-sparse due to \cref{lemma:sparsity}. We can charge $\psi\cdot\delta$ weight to each terminal on the smaller side of the cut. Since there are at most $\log |T|$ recursive levels and each cluster gets only one cut per recursive level, each terminal gets charged at most $\log |T|$ times. Summing up the weights charged to each terminal gives us a total edge weight of $O(\psi\cdot\delta\cdot|T|\log |T|)$.
    \end{proof}

    From \cref{lemma:cut_game_guarantee}, every cluster returned is certified to be $(O(\log^9n/\psi^5), \delta, \Omega(\psi^5/\log^9n), T)$-terminal strong, completing the proof.
\end{proof}

\section{Minimum Steiner Cut Using Sparsification} \label{sec:sparsification}

We complete our algorithm by showing a polylogarithmic maximum flow algorithm for minimum Steiner cut on a graph by using its terminal-strong decomposition. We use the minimum isolating cuts method and terminology described by \cite{steiner-polylog-flows}. The critical difference is that we use a terminal-strong decomposition instead of an expander decomposition. However, we prove that the same guarantees apply. 

We begin by introducing some definitions from \cite{steiner-polylog-flows}.

\begin{definition}[$k$-unbalanced, $k$-balanced]
    A subset of vertices $U \subseteq V$ is considered $k$-unbalanced if there exists a minimum Steiner cut $S$ such that $\min\{|S \cap U|, |\Bar{S}\cap U|\} \leq k$. $U$ is considered $k$-balanced with witness $(S, \Bar{S})$ if there exists a minimum Steiner cut $S$ such that $\min\{|S \cap U|, |\Bar{S}\cap U|\} > k$.
\end{definition}

Our main result of the section is as follows:

\begin{theorem}\label{theorem:sparsification}
    There exists a deterministic algorithm which given an undirected weighted graph $G=(V,E)$, an $(s, \delta, \gamma, T)$-terminal-strong decomposition $G'=\{V_1, V_2,...,V_\ell\}$, a parameter $k\gets C\log^Cn$ for some large enough constant $C>0$, and a subset of terminals $U \subseteq T$, does the following:
    \begin{enumerate}
        \item If $U$ is $k$-unbalanced, we return the minimum Steiner cut of $G$ with polylogarithmic maximum flow calls and near-linear additional runtime.
        \item If $U$ is $k$-balanced with witness $(S_1, S_2)$, we return a subset $U' \subset U$ such that $|U'|\leq|U|/2$ and $S_i\cap U' \neq \emptyset$ for both $i=1,2$.
    \end{enumerate}
\end{theorem}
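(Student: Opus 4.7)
The plan is to follow the two-case framework of Li and Panigrahi~\cite{steiner-polylog-flows} for minimum Steiner cut, substituting our $(s,\delta,\gamma,T)$-terminal-strong decomposition in place of their expander decomposition.

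For Case 1 (the $k$-unbalanced case), I would invoke the unbalanced-case procedure of \cref{lemma:unbalanced_case} directly on the terminal set $U$. Since the hypothesis guarantees that some minimum Steiner cut has at most $k = \textup{polylog}(n)$ terminals of $U$ on one side, this procedure recovers the cut using polylogarithmic max-flow calls and $\tilde O(m)$ additional time, which matches what we need.

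For Case 2 (the $k$-balanced case), the goal is to produce $U' \subseteq U$ of size at most $|U|/2$ such that for every balanced min Steiner cut $(S_1,S_2)$ we have $U' \cap S_i \neq \emptyset$ for both $i=1,2$. The terminal-strong decomposition supplies two structural guarantees that play the role of expansion in \cite{steiner-polylog-flows}: for any Steiner cut of weight at most $\delta$ and any cluster $V_i$, the minority side within $V_i$ contains at most $s = \textup{polylog}(n)$ terminals of $T$, and if $V_i$ is split, the cut must use at least $\gamma\delta$ weight strictly inside $V_i$. These properties let us apply the sparsification from Li-Panigrahi on $U$: by computing minimum isolating cuts (again with polylogarithmic max-flows) and comparing them to $\delta$, we identify pairs of terminals that must lie on the same side of any weight-$\le\delta$ cut and can therefore be merged into a single representative without losing the witness property. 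Because $k \gg s$, the $\le s$ minority terminals per cluster cannot absorb an entire side of the balanced cut, so halving $|U|$ in this manner preserves a representative on each side of $(S_1,S_2)$.

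The main obstacle will be verifying that the quantitative terminal-strong parameters---$s$ and $\gamma^{-1}$ polylogarithmic, together with the $\tilde O(\delta\cdot|T|)$ inter-cluster weight bound from \cref{thm:terminal-decomposition}---are strong enough to substitute for expander guarantees at every step of the Li-Panigrahi sparsification argument, in particular in the step where cross-cluster contributions are bounded against the intra-cluster cut weight. Once this substitution is verified, the polylogarithmic max-flow bound and $\tilde O(m)$ additional time bound follow directly from composing our terminal decomposition algorithm with the isolating cuts primitives of \cite{steiner-polylog-flows}.
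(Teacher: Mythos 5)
Your handling of the unbalanced case is correct and matches the paper: invoke \cref{lemma:unbalanced_case} directly on $U$ with $k=C\log^C n$, yielding polylog max-flows and $\tilde O(m)$ overhead. The high-level framing of the balanced case --- replace the expander decomposition in Li--Panigrahi by the $(s,\delta,\gamma,T)$-terminal-strong decomposition --- is also the right idea.

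However, the mechanism you propose for the balanced case is not correct, and it is not what the paper does. The paper's sparsification is a purely combinatorial selection from the clusters of the decomposition and uses \emph{no} further max-flow computations: classify each cluster $V_i$ by $|U_i|=|U\cap V_i|$ as trivial ($|U_i|=0$), small ($1\le|U_i|\le s^2$), or large ($|U_i|>s^2$), then let $U'$ take one arbitrary terminal from each small cluster and $s+1$ arbitrary terminals from each large cluster. The halving $|U'|\le|U|/2$ then follows from two counting facts: the number of nontrivial clusters is $O(\psi\,|U|\log n)$, because each cluster boundary is a Steiner cut of weight at least $\lambda$ while the total inter-cluster weight is $\tilde O(\psi\delta|U|)$ with $\delta=\Theta(\lambda)$; and the number of large clusters is $<|U|/s^2$. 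Your proposal of ``computing minimum isolating cuts and merging pairs of terminals that must lie on the same side of any weight-$\le\delta$ cut'' has two problems: first, two terminals in the same cluster can legitimately be on opposite sides of a weight-$\le\delta$ cut (up to $s$ on the minority side), so there is no guaranteed supply of forced pairs; second, even if some such pairs exist, nothing forces a factor-two reduction in $|U|$. Isolating cuts belong to the unbalanced case only.

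You also do not establish the witness-hitting property $S_i\cap U'\ne\emptyset$, and your sketch (``$k\gg s$, so the minority terminals cannot absorb an entire side'') is not the argument the paper uses. The actual argument needs the $\gamma$-parameter crucially: since each cluster that is split by the min Steiner cut contributes at least $\gamma\delta$ interior cut weight, and the total cut weight is at most $\delta$, at most $1/\gamma$ clusters are split (\cref{claim:small_number_cut}). Combined with the choice of $s+1$ representatives from large clusters (which forces a crossing representative whenever a large cluster is split, as the minority side has $\le s$ terminals), one does a four-way case analysis on cluster colors (white/light gray/dark gray/black) to show that if $U$ is $2s^2/\gamma$-balanced then $U'$ must hit both $S_1$ and $S_2$; a balancedness threshold of $k\ge 2s^2/\gamma$, not merely $k\gg s$, is what makes the contradiction go through. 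These elements --- the small/large cluster dichotomy, the cluster-count bound, the $1/\gamma$ bound on split clusters, and the case analysis --- are the substance of the balanced case and are missing from your proposal.
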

We leave the full proof details to \cref{appendix:sparsification}.

After computing the sparsified set $U\gets U'$ in the balanced case, we can recursively run our minimum Steiner cut algorithm on graph $G$ and terminal set $T \gets U$. Since the size of $U$ at least halves each time we sparsify, this only needs to be done at most $\log n$ times.

Also, since we never know which case we are specifically in (balanced or unbalanced) in \cref{theorem:sparsification}, we run both cases until $U$ must be guaranteed to be $k$-unbalanced. At this point we take the minimum over all Steiner cuts found, and we are guaranteed to have found a minimum one (see \cref{alg:steiner-min-cut}). 

\section{Conclusion}
Our algorithm solves deterministic minimum Steiner cut with polylogarithmic max flow calls and near-linear additional processing time. We thus show minimum Steiner cut reduces to maximum flow up to polylogarithmic factors in runtime. Specifically, the existence of a deterministic near-linear time $s-t$ max-flow algorithm would imply a deterministic near-linear time algorithm for minimum Steiner cut.

Our main contribution is the $(s, \delta, \gamma)$-terminal-strong decomposition. We are able to do this deterministically in polylogarithmic max flows and near-linear additional time for small $\delta$, which is not yet known for standard expander decompositions. We also believe that $(s, \delta, \gamma)$-strong and terminal-strong decompositions may have additional future applications in faster algorithms for graph problems.

\section*{Acknowledgements}
We want to thank Monika Henzinger, Satish Rao, and Di Wang for helpful discussions related
to \cref{lemma:partition-strong}.

\bibliographystyle{plain}
\bibliography{citation}
\newpage
\appendix
\section{Strong Partition Proof}\label{appendix:partition-proof}
We provide the complete proof for \cref{lemma:partition-strong} here. Within the context of this proof, we assign a separate identity to each edge, and we do not merge distinct edges upon contraction.

The algorithm begins with $H\gets G[C]$ and iteratively executes the following two steps in arbitrary order whenever possible.
 \begin{enumerate}
 \item Contract two vertices with at least $\gamma \alpha \delta$ total weight of edges between them.\label{step:partition-strong-1}
 \item Remove a vertex $v$ with weighted degree at most $\delta/100$ in $H$.\label{step:partition-strong-2}
 \end{enumerate}
At the end of the proof, we show how to perform these steps in near-linear time overall.

For each removed vertex, consider all original vertices in $C$ that were contracted to that vertex, and add a new output cluster consisting of those vertices. If $H$ is non-empty at the end of the iterative algorithm, add another cluster consisting of all vertices in $C$ that were contracted to a vertex in $H$. By construction, the resulting clusters partition $C$. By dynamically maintaining appropriate structures, the algorithm can be implemented in $\tilde{O}(|E(G[C])|)$ time.

The bound on the weight of inter-cluster edges follows from the fact that we remove at most $|C|$ many vertices in the algorithm, and each removal adds at most $\delta/100$ to the total weight of inter-cluster edges.

Since each cluster $C_i$ is a subset of $(s,\alpha \delta,0)$-strong cluster $C$, cluster $C_i$ is also $(s,\alpha \delta,0)$-strong. It remains to show that $C_i$ is $(s,\alpha \delta,\gamma)$-strong. That is, given a cut $(S,\overline S)$ in $G$ with $w(S,\overline S)\le \alpha \delta$, $S\cap C_i\ne\emptyset$, and $\overline S\cap C_i\ne\emptyset$, we have $\partial_{G[C_i]}(S\cap C_i)\ge\gamma \alpha \delta$.

First, take a set $C_i$ consisting of all vertices contracted to some vertex $v$. Color the vertices in $S\cap C_i$ black and the vertices in $\overline S\cap C_i$ white, and consider the contraction process starting from the set $C_i$ and ending at $v$, where each step contracts an edge between two vertices in the set with weight at least $\gamma \alpha \delta$. If we contract an edge whose endpoints have the same color, then assign the same color to the contracted vertex. Eventually, we contract an edge with differently colored endpoints. Each edge is included in $\partial_{G[C_i]}(S\cap C_i)$, and we contract edges of total weight at least $\gamma \alpha \delta$. It follows that $\partial_{G[C_i]}(S\cap C_i)\ge\gamma \alpha \delta$.

Now take the set $C_i$ consisting of all vertices in $C$ that were contracted to a vertex in $H$, if it is non-empty. Since $C_i$ is $(s,\alpha \delta,0)$-strong, we have $\min\{|S\cap C_i|,|\overline S\cap C_i|\}\le s$, and assume without loss of generality that $|S\cap C_i|\le s$. Color the vertices in $S\cap C_i$ black and the vertices in $\overline S\cap C_i$ white, and consider the contraction process again. If we contract an edge with differently colored endpoints, then $\partial_{G[C_i]}(S\cap C_i)\ge\gamma \alpha \delta$ as before. So suppose that never happens. At the end, let $B$ be the set of black vertices, which satisfies $\partial_HB=\partial_{G[C_i]}(S\cap C_i)$ by construction. Also, $|B|\le|S\cap C_i|\le s$ since the number of black vertices can only decrease over time. Since there are no more vertex deletions, each vertex in $B$ has weighted degree at least $\delta/100$ in $H$. Since there are no more edge contractions, the total weight of edges between black vertices is at most $\gamma \alpha \delta\binom{|B|}2$.

Suppose for contradiction that $\partial_{G[C_i]}(S\cap C_i) < \gamma \alpha \delta$, which means that
\begin{align*}
|B|\delta/100\le\textbf{\textup{vol}}_H(B)&=2w(E(H[B]))+\partial_HB
\\&=2w(E(H[B]))+\partial_{G[C_i]}(S\cap C_i)
\\&<2\cdot\gamma \alpha \delta\binom{|B|}2+\gamma\alpha \delta\le\gamma \alpha \delta|B|^2+\gamma\alpha \delta.
\end{align*}

This quadratic solves to $|B|\in\mathbb N\setminus[\ell,r]$ for some interval $[\ell,r]$. It suffices to show that $\ell\le1$ and $r\ge s$, which would imply that $|B|>s$, a contradiction. To show this claim, we simply show that the inequality fails for $|B|=1$ and $|B|=s$. For $|B|=1$, we obtain $\delta/100<2\gamma\alpha\delta$ which is false since $\gamma\le\frac1{200\alpha}$. For $|B|=s$, we obtain $s\delta/100<\gamma\alpha\delta s^2+\gamma\alpha\delta$ which is false since $\gamma\le\frac s{s^2+1}\cdot\frac1{100\alpha}$. It follows that there is no cut $(S,\overline S)$ in $G$ with $w(S,\overline S)\le \alpha \delta$, $S\cap C_i\ne\emptyset$, $\overline S\cap C_i\ne\emptyset$, and $\partial_{G[C_i]}(S\cap C_i)<\gamma \alpha \delta$.

Finally, we show that we can dynamically execute steps~(\ref{step:partition-strong-1}) and~(\ref{step:partition-strong-2}) in $\tilde O(|E(G[C])|)$ time overall. We maintain the vertex degrees, the number of edges incident to each vertex, and the total weight of edges between any two vertices, storing their values in a balanced binary tree. To execute step~(\ref{step:partition-strong-1}), query the pair of vertices with maximum total weight of edges, and to execute step~(\ref{step:partition-strong-2}), query the vertex with minimum degree. To update the maintained values over time, we perform the following. Every time a vertex is removed on step~(\ref{step:partition-strong-2}), we remove the incident edges and update values accordingly; each removed edge induces one update in each category, which is $O(|E(G[C])|)$ total updates overall. Suppose now that two vertices $u$ and $v$ are contracted, where $u$ has at most as many incident edges as $v$ (which can be checked by querying their maintained number of incident edges). We remove the contracted edges between $u$ and $v$, and for all remaining edges incident to $u$, replace the endpoint $u$ by $v$. This successfully implements step~(\ref{step:partition-strong-2}) with the contracted vertex labeled $v$. We now show that the total number of such edge updates is at most $2m\log2m$ where $m=|E(G[C])|$. For each vertex $v$, let $n_v$ be the current number of incident edges. Define the potential function
\[ \sum_{v\,:\,n_v>0}n_v\ln\frac{2m}{n_v} ,\]
which is at most $2m\log2m$ initially since $\sum_vn_v=2m$. The function $n_v\ln\frac{2m}{n_v}$ is increasing in $n_v$ in the range $n_v\in[1,m]$, which can be verified by taking the derivative:
\[ \frac d{dn_v}n_v\ln\frac{2m}{n_v}=\frac d{dn_v}n_v(\ln 2m-\ln n_v)=\ln 2m-(1+\ln n_v)=\ln\frac m{n_v}>0. \]
Since removing vertices can only decrease $n_v$, doing so can only decrease the potential. Suppose now that two vertices $u$ and $v$ are contracted with $n_u\le n_v$. After removing the contracted edges between $u$ and $v$, the values $n_u$ and $n_v$ decrease by the same amount, so $n_u\le n_v$ still. In the contraction step, we update the $n_u$ edges incident to $u$, and the $n_u\ln\frac{2m}{n_u}$ and $n_v\ln\frac{2m}{n_v}$ terms in the potential function become a single $(n_u+n_v)\ln\frac{2m}{n_u+n_v}$. The net difference is
\begin{align*}
    n_u\ln\frac{2m}{n_u}+n_v\ln\frac{2m}{n_v} - (n_u+n_v)\ln\frac{2m}{n_u+n_v} &\ge n_u\bigg(\ln\frac{2m}{n_u}-\ln\frac{2m}{n_u+n_v}\bigg)\\&\ge n_u\bigg(\ln\frac{2m}{n_u}-\ln\frac{2m}{2n_u}\bigg)=n_u ,
\end{align*}
where the second inequality follows from $n_u\le n_v$. Hence, the potential drops by at least $n_u$, while the number of edge updates is $n_u$. It follows that the total number of edge updates is at most $2m\log2m$, and each update induces one maintenance update in each category. Overall, the algorithm makes $O(m\log m)$ maintenance updates, and each update takes $O(\log m)$ time, which is $\tilde O(m)$ total as promised.

\section{Cut-Matching Proof}\label{appendix:cut-game}
For completeness, we prove \cref{lemma:cut_graph_expander} below by directly adapting the analysis of \cite{cut-matching-game}.

Let $L_{\max}=O(\log n)$ be large enough. Suppose for contradiction that the algorithm does not terminate within $L_{\max}$ iterations. On each iteration, the algorithm must execute \Cref{line:if-flow} with the partition $(C,\overline C)$ with $|C|,|\overline C|\ge|T|/3$ and $w_H(C,\overline C)<\delta |T|/12$. Following \cite{cut-matching-game}, we use the \emph{entropy function} potential
\[ \Phi_u(t)=-\sum_{v\in T}p_{u,v}(t)\log p_{u,v}(t) \qquad\text{and}\qquad \Phi(t) = \sum_{u\in T}\Phi_u(t), \]
where $p_{u,v}(t)\in[0,1]$ satisfy $\sum_{v\in T}p_{u,v}(t)=1$ for all $u\in T$. Intuitively, $p_{u,v}$ models a random walk on the cut-graph, and $p_{u,v}(t)$ is the probability distribution at time $t$ starting at vertex $u\in T$. The entropy function $\Phi_u(t)$ is at most $\ln|T|$, so $\Phi(t)\le|T|\ln|T|$ always. We will show that the potential function $\Phi(t)$ can never decrease, and it increases by $\Omega(|T|)$ every time the algorithm executes \Cref{line:if-flow}. It follows that there can only be $O(\log|T|)$ total iterations.

Let $M_t$ be the edges added to $H$ on \Cref{line:add-edges}. Since the flow $f$ sends at most $\delta\cdot\kappa$ flow through each terminal in $T$, and since the weight of the edges in $H$ are scaled by $1/\psi$, each vertex has weighted degree at most $\delta \cdot \kappa/\psi \leq \delta$ in $M_t$. Also, since $f$ has value at least $|T|/6\cdot\delta\cdot\psi$ the total weight of $M_t$ is at least $|T|/6\cdot\delta$.

Initially, set $p_{u,v}(0)=1$ if $u=v$ and $p_{u,v}(0)=0$ otherwise. For each iteration $t$, set
\begin{gather}
p_{u,v}(t+1)=\frac{2\delta-\deg_{M_t}(v)}{2\delta}p_{u,v}(t)+\sum_{v'\in T}\frac{w_{M_t}(v',v)}{2\delta}p_{u,v'}(t) . \label{eq:puv}
\end{gather}
Note that $p_{u,v}(t+1)$ is a convex combination of $p_{u,v'}(t)$ over all $v'\in T$. Since the entropy function $\Phi_u(t)$ is concave, we obtain $\Phi_u(t+1)\ge\Phi_u(t)$, which implies that $\Phi(t+1)\ge\Phi(t)$.

Given a partition $(C,\overline C)$ on iteration $t$, define $q_u(t)=\sum_{v\in\overline C}p_{u,v}(t)$, which represents the probability that the random walk starting at $u$ ends up in $\overline C$.
\begin{claim}
$\sum_{u\in C}q_u(t)<|T|/100$.
\end{claim}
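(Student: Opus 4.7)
The plan is to introduce $x_v(t) := \sum_{u \in C} p_{u,v}(t)$ so that the quantity of interest becomes $\sum_{u \in C} q_u(t) = \sum_{v \in \overline{C}} x_v(t)$. Since $p_{u,v}(0) = \mathbf{1}[u=v]$, the initial value $\sum_{v \in \overline{C}} x_v(0) = 0$, and the whole task is to control how fast mass crosses the cut as the matchings $M_0, M_1, \ldots, M_{t-1}$ are applied. First I would record that the one-step transition matrix defined by~(\ref{eq:puv}) is doubly stochastic: its column sums equal $1$ because $M_{t'}$ is undirected, so $w_{M_{t'}}(v',v) = w_{M_{t'}}(v,v')$. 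Consequently $\sum_{u \in T} p_{u,v}(t) = 1$ for all $v,t$, and in particular $x_v(t) \le 1$ uniformly.

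Next I would estimate the per-step increase. Summing~(\ref{eq:puv}) first over $u \in C$ and then over $v \in \overline{C}$, splitting $\deg_{M_t}(v) = w_{M_t}(v,C) + w_{M_t}(v,\overline{C})$ and the $v'$-sum into its $C$ and $\overline{C}$ parts, the contributions purely internal to $\overline{C}$ cancel in pairs, leaving
\[
\sum_{v \in \overline{C}} x_v(t+1) - \sum_{v \in \overline{C}} x_v(t) = \sum_{v' \in C} x_{v'}(t)\cdot\frac{w_{M_t}(v',\overline{C})}{2\delta} - \sum_{v \in \overline{C}} x_v(t)\cdot\frac{w_{M_t}(v,C)}{2\delta}.
\]
Dropping the nonnegative second term and using $x_{v'}(t) \le 1$ bounds the per-step change by $w_{M_t}(C,\overline{C})/(2\delta)$.

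Telescoping over $t' = 0, 1, \ldots, t-1$ and using that $H$ at the start of iteration $t$ equals $\bigcup_{t' < t} M_{t'}$ gives
\[
\sum_{u \in C} q_u(t) = \sum_{v \in \overline{C}} x_v(t) \le \frac{w_H(C,\overline{C})}{2\delta}.
\]
Since the bipartition $(C,\overline{C})$ is formed on \Cref{line:combine} by merging clusters of the $(s,\alpha\delta,\gamma)$-strong decomposition of $H$, \Cref{lemma:partition} applied to $H$ (with $n = |T|$) gives $w_H(C,\overline{C}) < |T|\delta/50$, from which the claim $\sum_{u \in C} q_u(t) < |T|/100$ follows.

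The only subtle step is the per-step identity: the cancellation of $\overline{C}$-internal terms is what lets us charge the entire growth to edges of $M_t$ crossing $(C,\overline{C})$. Everything else is the doubly-stochastic upper bound $x_v(t) \le 1$ and the inter-cluster weight guarantee inherited from \Cref{lemma:partition}.
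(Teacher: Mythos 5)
Your proof is correct and follows essentially the same approach as the paper: bound the per-step increase of $\sum_{u\in C}q_u$ by $w_{M_t}(C,\overline C)/(2\delta)$ and telescope, then use \Cref{lemma:partition} to bound $w_H(C,\overline C)$ by the total inter-cluster weight $\le|T|\delta/50$. Your write-up is in fact a bit tighter than the paper's: you isolate the doubly-stochastic bound $x_v(t)\le 1$ and make the cancellation of $\overline C$-internal terms explicit, whereas the paper's displayed chain silently replaces $\sum_{u\in C}p_{u,v'}(t)$ for $v'\in C$ by $1$ (it is only $\le 1$), though the final inequality still holds.
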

\begin{proof}
We have 
\begin{align*}
\sum_{u\in C}q_u(t+1)&=\sum_{u\in C}\sum_{v\in\overline C}p_{u,v}(t+1)\\&=\sum_{u\in C}\sum_{v\in\overline C}\left(\frac{2\delta-\deg_{M_t}(v)}{2\delta}p_{u,v}(t)+\sum_{v'\in T}\frac{w_{M_t}(v',v)}{2\delta}p_{u,v'}(t)\right)
\\&=\sum_{u\in C}\sum_{v\in\overline C}\frac{2\delta-\deg_{M_t}(v)}{2\delta}p_{u,v}(t)+\sum_{u\in C}\sum_{v\in\overline C}\sum_{v'\in T}\frac{w_{M_t}(v',v)}{2\delta}p_{u,v'}(t)
\\&=\sum_{u\in C}\sum_{v\in\overline C}\frac{2\delta-\deg_{M_t}(v)}{2\delta}p_{u,v}(t)+\sum_{v\in\overline C}\sum_{v'\in C}\frac{w_{M_t}(v',v)}{2\delta}\sum_{u\in C}p_{u,v'}(t)
\\&=\sum_{u\in C}\sum_{v\in\overline C}\frac{2\delta-\deg_{M_t}(v)}{2\delta}p_{u,v}(t)+\sum_{v\in \overline C}\sum_{v'\in C}\frac{w_{M_t}(v',v)}{2\delta}
\\&\le\sum_{u\in C}\sum_{v\in\overline C}p_{u,v}(t)+\sum_{v\in \overline C}\sum_{v'\in C}\frac{w_{M_t}(v',v)}{2\delta}
\\&=q_u(t)+\frac{w_{M_t}(C,\overline C)}{2\delta},
\end{align*}
By induction on $t$, we obtain $\sum_{u\in C}q_u(t+1)=(\sum_{i=1}^tw_{M_t}(C,\overline C))/2\delta$. Note that $\sum_{i=1}^tw_{M_t}(C,\overline C)$ is the total value of the cut $(C,\overline C)$ in the cut-graph $H_t$, which has value at most $|T|\delta/50$ by the construction of cut $(C,\overline C)$. It follows that $\sum_{u\in C}q_u(t)<|T|/100$.
\end{proof}

Since $|C|\ge|T|/3$, the values $q_u(t)$ for $u\in C$ have average at most $3/100$. By Markov's inequality, a constant fraction have value $q_u(t)\le1/24$. We will show that for each vertex $u\in T$ with $q_u(t)\le1/24$, we have $\Phi_u(t+1)\ge\Phi_u(t)+\Omega(1)$. This would imply $\Phi(t+1)\ge\Phi(t)+\Omega(|T|)$ and finish the analysis.

For the rest of the proof, fix a vertex $u\in T$ with $q_u(t)=\sum_{v\in\overline C}p_{u,v}(t)\le1/24$. By Markov's inequality, at most $1/8$ fraction of the vertices in $\overline C$ have $p_{u,v}(t)\ge1/3$; call these vertices \emph{bad}. Similarly, $\sum_{v\in C}p_{u,v}(t)\ge23/24$, and by (reverse) Markov's inequality, at most $1/8$ fraction of the vertices in $C$ have $p_{u,v}(t)\le2/3$; call these vertices \emph{bad}. Overall, at most $|T|/8$ vertices are bad. Now consider the matching $M_{t+1}$ of total weight at least $|T|/6\cdot\delta$. Each vertex has degree at most $\delta$ in $M_{t+1}$, so at most $|T|/8\cdot\delta$ weight of edges in $M_{t+1}$ are incident to bad vertices. So a constant fraction of the edges of $M_{t+1}$ (by weight) have both endpoints good, which means one endpoint $u$ has value $p_{u,v}(t)\le1/3$ and the other has value at least $2/3$. The definition of $p_{u,v}(t+1)$ in (\ref{eq:puv}) will ``mix'' these separated values, and a tedious but straightforward algebraic calculation establishes $\Phi(t+1)\ge\Phi(t)+\Omega(|T|)$.

\section{Sparsification Procedure}\label{appendix:sparsification}
The details of the full sparsification procedure from \cref{sec:sparsification} are detailed here. We prove the two cases separately:
  
\subsection{Unbalanced Case}     
We use the following method from \cite{steiner-polylog-flows} to deal with the unbalanced case:
\begin{lemma}[Theorem 4.2 from \cite{steiner-polylog-flows}]\label{lemma:unbalanced_case}
    Consider a graph $G = (V, E)$, a parameter $k\geq 1$, and a $k$-unbalanced set $U \subseteq T$. Then, we can compute the minimum Steiner cut of $G$ in $k^{O(1)}\text{polylog}(n)$ many $s-t$ max-flow computations plus $\Tilde{O}(m)$ deterministic time.
\end{lemma}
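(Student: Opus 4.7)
My plan is to prove the lemma by combining the \emph{minimum isolating cuts} primitive of Li--Panigrahi with a deterministic splitter construction to handle the $k$-unbalanced case. The key subroutine I would take as a black box is the isolating cuts lemma: given a graph $G$ and a terminal set $T$, in $O(\log|T|)$ max-flow calls on graphs of size $O(n)$ vertices and $O(m)$ edges, one can compute for every $t\in T$ simultaneously the minimum cut $(R_t,V\setminus R_t)$ with $t\in R_t$ and $T\setminus\{t\}\subseteq V\setminus R_t$. Its proof is a divide-and-conquer that recursively bisects $T$, computes a min $T_1$-$T_2$ cut via max-flow, and uses submodularity of cuts to extract the true per-terminal isolating cut on each side before recursing.

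The main reduction goes as follows. Let $(S^*,\overline{S^*})$ be an optimal Steiner cut and let $A=S^*\cap U$ be the smaller side, with $|A|\le k$ by the unbalanced hypothesis. Suppose I could produce a subset $F\subseteq U$ satisfying $F\cap A=\{u\}$ for a single $u$, and $F\cap(U\setminus A)\ne\emptyset$. Then the minimum $u$-vs-$(F\setminus\{u\})$ cut, which the isolating cuts primitive returns, is (i) at most $w(S^*)$ because $S^*$ is itself a feasible such cut, and (ii) at least $w(S^*)$ because any $u$-vs-$(F\setminus\{u\})$ cut is a genuine Steiner cut, as it separates two terminals of $U$. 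Therefore it equals $w(S^*)$ exactly, and the global minimum over all $F$ in a suitable family and over all $u\in F$ recovers the Steiner mincut.

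To build the requisite family $\mathcal{F}$ deterministically, I would invoke the $(|U|,k)$-splitter construction of Naor--Schulman--Srinivasan, which gives $k^{O(1)}\log|U|$ functions $f:U\to[k^2]$ such that for every $A\subseteq U$ with $|A|\le k$ some $f$ is injective on $A$. Take $\mathcal{F}$ to consist of all buckets $f^{-1}(i)$ across all $f$ in the splitter and all $i\in[k^2]$, together with an additional ``pad vertex'' from $U\setminus F$ as needed to guarantee $F\cap(U\setminus A)\ne\emptyset$ for whichever $A$ is relevant (this costs only a constant factor overhead per set). For an $A$-injective $f$ and each $u\in A$, the bucket $F=f^{-1}(f(u))$ has $F\cap A=\{u\}$, so the requirement above is met for some $F\in\mathcal{F}$ and some $u$. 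For each $F\in\mathcal{F}$ I run isolating cuts at cost $O(\log n)$ max-flows, for a grand total of $k^{O(1)}\,\text{polylog}(n)$ max-flow calls, and take the minimum-weight Steiner cut observed. Splitter construction and bookkeeping fit in $\tilde O(m)$ deterministic time.

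The main obstacle, and the step I would not re-derive, is the correctness of the isolating cuts primitive itself: showing via submodularity that the divide-and-conquer recursion outputs the truly minimum $t$-vs-$(T\setminus\{t\})$ cut for each $t$, and not merely a min bisection cut along the way. Since the lemma is cited directly from prior work, I would quote that result as a black box. A secondary technicality is arranging the splitter-derived sets so that $F\cap(U\setminus A)$ is always nonempty; this is routine and does not affect the asymptotic bounds.
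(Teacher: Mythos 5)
The paper does not prove this lemma; it is imported verbatim as Theorem~4.2 of the Li--Panigrahi paper \cite{steiner-polylog-flows}, so there is no in-paper argument to compare against. Your reconstruction via minimum isolating cuts plus an NSS splitter family is the right shape, and the core reduction --- if $F\cap A=\{u\}$ and $F\setminus\{u\}\neq\emptyset$ then the isolating-cut value for $u$ equals $w(S^*)$, while every isolating cut on a set $F$ of size at least two is a genuine Steiner cut and hence a valid upper bound --- is correct.

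There is, however, a real gap in the ``pad vertex'' step. You claim a single extra vertex from $U\setminus F$ with ``constant factor overhead'' guarantees $F\cap(U\setminus A)\neq\emptyset$, but $A$ is unknown and the chosen pad vertex may itself lie in $A$, in which case the augmented $F$ has two elements of $A$ and the argument breaks. The fix is cheap but not constant-overhead: observe that for the $A$-injective splitter function $f$, any bucket $F=f^{-1}(i)$ of size at least two automatically contains a vertex of $U\setminus A$ (since $|F\cap A|\le 1$), so padding is only needed for singleton buckets $F=\{u\}$; for those, try $k+1$ distinct pad vertices $v_1,\dots,v_{k+1}\in U\setminus\{u\}$, one of which must lie outside $A$ because $|A|\le k$. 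This multiplies the number of isolating-cut invocations by at most $k+1$, which still lands comfortably within the claimed $k^{O(1)}\textup{polylog}(n)$ bound. A related, more cosmetic point: the lemma as stated implicitly assumes $A=S^*\cap U\neq\emptyset$ (i.e., $U$ actually crosses some minimum Steiner cut); your argument silently uses this, and it is maintained as an invariant by the sparsification step of the surrounding algorithm, so it is fine in context but worth flagging.
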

With $k\gets C\log^Cn$, this gives us polylogarithmic maximum flow calls and near-linear additional runtime as desired.

\subsection{Balanced Case}   
If $U$ is $k$-balanced, we use a sparsification procedure by using the $(s, \delta, \gamma)$-terminal-strong decomposition with terminal set $U$ to find a subset $U' \subset U$ such that $|U'| \leq |U|/2$ with the guarantee that some Steiner minimum cut contains at least one vertex from $U'$ in both of its sides. We then set $U \gets U'$.
   
We define a cluster $V_i$ to be \emph{trivial} if $|U_i| = 0$, \emph{small} if $1\leq |U_i| \leq s^2$, and \emph{large} if $|U_i| > s^2$. To construct set $U'$, for each cluster $V_i$, we take an arbitrary vertex from $U_i$ if $V_i$ is small, or $s+1$ arbitrary vertices from $U_i$ if $V_i$ is large. To prove correctness, we show that $U'$ is always at least a constant factor smaller than $U$ each iteration, and that $U'$ always contains at least one terminal on both sides of a minimum Steiner cut if $U$ is $k$-balanced.

\subsubsection{Size Bound}
\begin{claim}\label{claim:total_clusters}
    There are at most $O(\psi\cdot|U|\log n)$ total clusters, i.e. $\ell \leq O(\psi\cdot|U|\log n)$.
\end{claim}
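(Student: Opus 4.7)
The plan is to partition the clusters into \emph{trivial} ones (those with $V_i \cap U = \emptyset$) and \emph{non-trivial} ones (with $|V_i \cap U| \ge 1$), and bound each count separately by $O(\psi |U| \log n)$. The two main ingredients are the total intercluster edge weight bound $O(\psi \delta |U| \log |U|)$ established in the proof of \cref{thm:terminal-decomposition}, together with the structure of the recursion tree of \cref{alg:terminal_decomposition}.

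First I would bound the non-trivial clusters. For any non-trivial cluster $V_i$, the cut $(V_i, V\setminus V_i)$ in $G$ has weight equal to the boundary weight of $V_i$. In the iteration of \cref{alg:steiner-min-cut} whose guess satisfies $\delta = \tilde\lambda \le \lambda$, whenever $V \setminus V_i$ also contains a terminal of $U$ this cut is a Steiner cut of $G$ and hence has weight at least $\lambda \ge \delta$. Since at most one non-trivial cluster can contain all of $U$, all but at most one non-trivial cluster has boundary weight at least $\delta$. Let $N_{\text{nt}}$ denote the number of non-trivial clusters. Summing cluster boundary weights (which double-counts each intercluster edge) gives
\[
(N_{\text{nt}} - 1)\cdot \delta \;\le\; 2\cdot O(\psi \delta |U| \log |U|),
\]
so $N_{\text{nt}} \le O(\psi |U| \log n)$.

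Next I would bound the trivial clusters by charging each one to a case~3 (trim) internal node of \cref{alg:terminal_decomposition}'s recursion tree. A trivial cluster can only be produced by a recursive sub-call on an empty terminal subset, and such a sub-call is spawned only as the $\overline U$-branch of a case~3 node: case~2 balanced splits guarantee $\ge |T|/6$ terminals on each side, and case~1 simply returns the entire input as a single cluster. Each case~3 node also directly outputs the non-trivial cluster $G[U]$ with $\ge |T|/3$ terminals of the current sub-call by \cref{lemma:cut_game_guarantee}. Letting $N_t$ be the number of trivial clusters, we therefore have $N_t$ at most the number of case~3 nodes, which is at most $N_{\text{nt}}$, giving
\[
\ell \;=\; N_t + N_{\text{nt}} \;\le\; 2 N_{\text{nt}} \;=\; O(\psi |U| \log n).
\]

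The main subtlety is the at-most-one non-trivial cluster that contains all of $U$: its boundary cut is not a Steiner cut of $G$ and so need not have weight $\ge \delta$. This loses only an additive $+1$ that gets absorbed into the $O(\cdot)$ once $|U|$ is large enough, and the small-$|U|$ regime can be handled as a base case. The argument also relies on being in the iteration where $\tilde\lambda \le \lambda$; since \cref{alg:steiner-min-cut} returns the minimum Steiner cut across all iterations of $\tilde\lambda$, it suffices for the sparsification analysis to hold in that one correct iteration.
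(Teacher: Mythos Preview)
Your core argument for non-trivial clusters is essentially the paper's proof: each such cluster's boundary is a Steiner cut (with respect to the original terminal set $T$, since $U\subseteq T$ and every other cluster also contains a $U$-terminal), hence has weight at least $\lambda$, and summing against the $O(\psi\delta|U|\log n)$ intercluster bound yields the claim. The paper states this in three lines without the ``at most one cluster contains all of $U$'' caveat, which you handle more carefully.

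Your recursion-tree analysis for trivial clusters is correct but unnecessary: there are in fact no trivial clusters. Every cut produced inside \textsc{Terminal-Decomp} is $\psi\delta$-terminal-sparse by \cref{lemma:sparsity} and \cref{lemma:cut_game_guarantee}, which forces both sides of every such cut to contain at least one terminal of the current sub-call's terminal set; since these sets are nested subsets of the top-level $U$, every output cluster contains at least one $U$-terminal and $N_t=0$. So the charging of trivial clusters to case-3 nodes, while valid, is bounding an empty sum.

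One small inconsistency: you restrict to an iteration with $\tilde\lambda\le\lambda$, but the remainder of the sparsification analysis (in particular \cref{claim:small_number_cut}, which needs the Steiner min-cut to have weight at most $\delta$) requires the iteration with $\tilde\lambda\in[\lambda,2\lambda]$, which is the convention the paper adopts. In that iteration your inequality $\lambda\ge\delta$ is reversed; the trivial fix is to use $\lambda\ge\delta/2$ instead, which only affects the hidden constant.
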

\begin{proof}
    The total weight of intercluster edges is upper-bounded by $O(\psi\cdot\delta\cdot|U|\log n)$ from the guarantee of our $(s, \delta, \gamma, U)$-terminal-strong decomposition. We set $\delta \gets \Tilde{\lambda}$, where $\Tilde{\lambda}\in[\lambda,2\lambda]$ denotes a 2-approximation of the value of the minimum Steiner cut $\lambda$ on graph $G$. Since $\partial V_i$ is a Steiner cut in graph $G$, we have $\lambda \leq w(\partial V_i)$. Therefore
    \begin{equation}
        \ell\lambda \leq \sum_{i\in[\ell]} w(\partial V_i) \leq  O(\psi\cdot\lambda\cdot|U|\log n)
    \end{equation}
    Dividing by $\lambda$ on the left and right sides gives us our claim.
\end{proof}

\begin{lemma}
    The sparsification procedure above returns a set $U'$ such that $|U'|\leq |U|/2$.
\end{lemma}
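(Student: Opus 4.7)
The plan is to split $|U'|$ into its contributions from trivial, small, and large clusters and bound each piece. Trivial clusters contribute nothing to $U'$ by definition, so the entire budget reduces to bounding the small-cluster contribution and the large-cluster contribution separately, then showing each is at most $|U|/4$.

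First I would handle the small clusters. Each small cluster contributes at most one vertex to $U'$, so the total contribution from small clusters is at most the number of clusters $\ell$. By \cref{claim:total_clusters} we have $\ell \leq O(\psi\cdot|U|\log n)$. Since the terminal-strong decomposition was invoked with sparsity parameter $\psi = 1/\textup{polylog}(n)$, and since the implicit constants in the $O(\cdot)$ notation are fixed by the decomposition theorem, $\psi$ can be chosen small enough (by setting the $\textup{polylog}$ exponent large enough) that $O(\psi\log n)\leq 1/4$, giving a small-cluster contribution of at most $|U|/4$.

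Next, for the large clusters, each contains strictly more than $s^2$ terminals of $U$ by definition, so disjointness of the clusters forces the number of large clusters to be at most $|U|/s^2$. Each large cluster contributes $s+1$ vertices to $U'$, so the total large-cluster contribution is at most $(s+1)\cdot|U|/s^2 \leq 2|U|/s$. Recalling from \cref{thm:terminal-decomposition} that the decomposition output is $(\tilde O(1/\psi^5),\delta,\tilde\Omega(\psi^5),T)$-terminal-strong, we have $s = \tilde O(1/\psi^5) = \textup{polylog}(n)$, which is in particular at least $8$ (again, for suitable $\psi$), yielding a large-cluster contribution of at most $|U|/4$.

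Summing the two bounds gives $|U'| \leq |U|/4 + |U|/4 = |U|/2$, as required. There is no real obstacle here; the only subtlety is that the argument requires the $\textup{polylog}$ factors in $\psi$ and $s$ to be chosen in concert so that both bounds simultaneously hold, but since $\psi$ and $s$ are free parameters of the decomposition routine, this is just a parameter-setting exercise performed once at the outset of the algorithm.
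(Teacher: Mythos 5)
Your proof is correct and takes essentially the same approach as the paper: bound the large-cluster contribution by $(s+1)\cdot|U|/s^2$, bound the small-cluster contribution via \cref{claim:total_clusters}, and observe that both are small because $s$ is polylogarithmic and $\psi=1/\textup{polylog}(n)$ can be chosen small enough. The only cosmetic difference is that you explicitly split into two $|U|/4$ targets, whereas the paper writes the combined inequality $|U'| < |U|/s^2\cdot(1+s) + O(\psi\cdot|U|\log n)$ and concludes in one step; this is the same calculation.
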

\begin{proof}
    We can only have less than $|U|/s^2$ large clusters, and at most $O(\psi\cdot|U|\log n)$ small clusters due to \cref{claim:total_clusters}. From our construction, $U'$ has total size
    \begin{equation}
        |U'| < |U|/s^2\cdot(1+s) + O(\psi\cdot|U|\log n) \cdot 1
    \end{equation}
    With $s=O((L_{\max}/\psi)^2\log^2n) = \Tilde O(1/\psi^2)$ and $\psi = 1/\text{polylog}(n)$, we get that $|U'| \leq |U|/2$ as desired with a small enough chosen $\psi$.
\end{proof}

\subsubsection{Hitting Both Sides of the Minimum Steiner Cut}
This following claim ensures that only a few number of clusters are actually cut (have terminals on both sides) by the minimum Steiner cut.
\begin{claim}\label{claim:small_number_cut}
    (Analogous to Claim 4.12 in \cite{steiner-polylog-flows}) Let $C$ be one side of a minimum Steiner cut of $G$. Then, $C$ cuts at most $1/\gamma$ clusters of $G'$ (we define a cluster as being cut if there is at least one terminal on both sides of the cluster and both sides of the cut).
\end{claim}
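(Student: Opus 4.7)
The plan is to apply the terminal-strong guarantee of each cluster $V_i$ directly to the minimum Steiner cut $C$. Since the decomposition $G'$ is $(s,\delta,\gamma,U)$-terminal-strong with $\delta = \tilde\lambda \in [\lambda, 2\lambda]$, we have $w(\partial_G C) = \lambda \leq \delta$, so the terminal-strong definition applies to the cut $(C,\overline C)$. In particular, whenever $V_i$ is \emph{cut} in the sense of the claim, i.e., $|C \cap V_i \cap U|>0$ and $|\overline C \cap V_i \cap U|>0$, the definition guarantees that the cut edges inside the cluster satisfy $w(\partial_{G[V_i]}(C \cap V_i)) \geq \gamma \cdot \delta$.

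Next, I would sum this lower bound over all cut clusters. The key observation is that for distinct clusters $V_i \neq V_j$, the edge sets $\partial_{G[V_i]}(C \cap V_i)$ and $\partial_{G[V_j]}(C \cap V_j)$ are disjoint, since each consists of edges with both endpoints in its respective cluster, and the clusters of $G'$ are pairwise disjoint. Moreover, each such edge lies in $\partial_G C$ because both of its endpoints are in $V_i \subseteq V$ and it is cut by $(C,\overline C)$. So if $k$ clusters are cut, then
\[ k \cdot \gamma \cdot \delta \;\leq\; \sum_{i \text{ cut}} w\bigl(\partial_{G[V_i]}(C \cap V_i)\bigr) \;\leq\; w(\partial_G C) \;=\; \lambda \;\leq\; \delta. \]
Dividing through by $\gamma \cdot \delta$ yields $k \leq 1/\gamma$, which is the claim.

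The proof is essentially mechanical once the terminal-strong property is set up correctly; there is no substantial obstacle. The only subtlety to be careful about is that the terminal-strong definition requires the Steiner cut to have weight at most $\delta$, which is why we rely on the guess $\tilde\lambda \geq \lambda$ when invoking the property. Everything else (disjointness of the internal cut edges across clusters, containment in $\partial_G C$) is immediate from the fact that the clusters form a vertex partition.
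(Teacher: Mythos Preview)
Your proposal is correct and follows essentially the same argument as the paper: apply the $(s,\delta,\gamma,U)$-terminal-strong guarantee to the minimum Steiner cut $(C,\overline C)$ (using $w(\partial_G C)=\lambda\le\delta$), get $w(\partial_{G[V_i]}C)\ge\gamma\delta$ for each cut cluster, and use edge-disjointness of the induced subgraphs $G[V_i]$ to bound the number of cut clusters by $1/\gamma$. Your write-up is in fact a bit more explicit than the paper's about why the internal cut-edge sets are disjoint and contained in $\partial_G C$.
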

\begin{proof}
    Steiner min-cut $\partial C$ has a maximum weight of $\delta$. For all clusters $V_i$, the portion of $C$ that intersects it (call this $\partial C_{V_i}$) has a terminal on both sides of it in $V_i$. From the definition of $(s, \delta, \gamma, U)$-terminal-strong, we must have that $w(\partial_{G[V_i]}C) \geq \gamma\cdot\delta$. Since clusters $G[V_i]$ are edge-disjoint and the minimum Steiner cut is upper-bounded by $\delta$, $\partial C$ cannot cut more than $1/\gamma$ clusters of $G'$.
\end{proof}

\begin{lemma}
    Suppose $U$ is $2s^2/\gamma$-balanced with witness $(S_1, S_2)$. Then $U' \cap S_i \neq \emptyset$ for both $i=1,2$.
\end{lemma}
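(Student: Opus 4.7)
The plan is to argue by contradiction: suppose some choice of ``arbitrary'' selections in the sparsification produces $U' \cap S_1 = \emptyset$, and derive $|S_1 \cap U| \le s^2/\gamma$, contradicting the hypothesis $|S_1 \cap U| > 2s^2/\gamma$. A symmetric argument handles $S_2$.

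First I would classify each cluster $V_i$ by its interaction with the witness cut $(S_1, S_2)$: call it \emph{$S_1$-pure} if $V_i \cap U \subseteq S_1$, \emph{$S_2$-pure} if $V_i \cap U \subseteq S_2$, and \emph{cut} if both $V_i \cap U \cap S_1$ and $V_i \cap U \cap S_2$ are nonempty. Since the witness cut $(S_1,S_2)$ has weight $\lambda \le \delta = \tilde\lambda$ and each $V_i$ is $(s,\delta,\gamma,U)$-terminal-strong, every cut cluster satisfies $\min\{|V_i \cap U \cap S_1|, |V_i \cap U \cap S_2|\} \le s$.

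Next I would identify which clusters are \emph{forced} to contribute to $U' \cap S_1$. Any $S_1$-pure cluster contributes an element of $S_1$ no matter which vertex is chosen, so the assumption $U' \cap S_1 = \emptyset$ rules out all $S_1$-pure clusters. For a large cut cluster ($|V_i \cap U| > s^2$), the terminal-strong dichotomy says either $|V_i \cap U \cap S_1| \le s$ or $|V_i \cap U \cap S_2| \le s$; in the latter case, selecting $s+1$ terminals from $V_i$ forces at least one into $S_1$, so this case is also excluded. Therefore every large cut cluster must satisfy $|V_i \cap U \cap S_1| \le s$. For small cut clusters, the bound $|V_i \cap U \cap S_1| \le |V_i \cap U| \le s^2$ is automatic.

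To finish, I would sum: all terminals of $S_1 \cap U$ now reside in cut clusters (no $S_1$-pure clusters exist), and each cut cluster contributes at most $s^2$ such terminals. Combining with \Cref{claim:small_number_cut}, which bounds the number of cut clusters by $1/\gamma$, gives
\[
|S_1 \cap U| \;\le\; \frac{1}{\gamma}\cdot s^2 \;=\; \frac{s^2}{\gamma} \;<\; \frac{2s^2}{\gamma},
\]
the required contradiction. I expect the main subtlety to be ensuring the terminal-strong property is invoked against the correct witness: one must verify that the minimum Steiner cut's weight does not exceed the decomposition parameter $\delta$, which holds by the guess $\tilde\lambda \ge \lambda$ used in \cref{alg:steiner-min-cut}, so the dichotomy on large cut clusters applies cleanly.
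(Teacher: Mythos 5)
Your proof is correct, and it uses the same ingredients as the paper's---the terminal-strong dichotomy applied to the witness cut $(S_1,S_2)$ of weight $\lambda\le\tilde\lambda=\delta$, the $1/\gamma$ bound on the number of cut clusters from \cref{claim:small_number_cut}, and the fact that selecting $s+1$ terminals from a large cluster cannot all land on the side with $\le s$ terminals. The difference is purely organizational: the paper classifies clusters into four colors (white, light gray, dark gray, black) and then runs a four-way case analysis based on which kinds of large clusters exist, with contradiction arguments embedded in some sub-cases and direct verifications in others. You instead fold everything into a single contradiction: assuming $U'\cap S_1=\emptyset$ immediately kills all $S_1$-pure (black) clusters and forces $|V_i\cap U\cap S_1|\le s$ for large cut clusters, so every $S_1$-terminal of $U$ sits in one of $\le 1/\gamma$ cut clusters each contributing $\le s^2$, giving $|S_1\cap U|\le s^2/\gamma<2s^2/\gamma$. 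This is cleaner than the paper's treatment---it avoids distinguishing the ``no large cluster'' and ``large cluster'' cases entirely, and the symmetry between $S_1$ and $S_2$ is manifest from the start rather than handled via a separate ``symmetric case.'' Both arguments prove the same bound with the same constant, so the gain is expository rather than quantitative, but it is a genuine simplification.
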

\begin{proof}
The proof is a direct modification of Lemma 4.13 of \cite{steiner-polylog-flows}, replacing each instance of $1/\phi$ with either $s$ or $1/\gamma$. Call a cluster $V_i$:
 \begin{enumerate}
 \item \emph{white} if $S_1\cap U_i=\emptyset$ (i.e., $U_i\subseteq S_2$).
 \item \emph{light gray} if $0<|S_1\cap U_i|\le |S_2\cap U_i|<|U_i|$, which implies that $0<|S_1\cap U_i|\le s$.
 \item \emph{dark gray} if $0<|S_2\cap U_i|<|S_1\cap U_i|<|U_i|$, which implies that $0<|S_2\cap U_i|\le s$.
 \item \emph{black} if $S_2\cap U_i=\emptyset$ (i.e., $U_i\subseteq S_1$).
 \end{enumerate}
Every cluster must be one of the four colors, and by \Cref{claim:small_number_cut}, there are at most $1/\gamma$ many (light or dark) gray clusters since $U_i\cap S_1, U_i\cap S_2 \not= \emptyset$ implies that $S_1$ cuts cluster $V_i$. Note that since we are only considering clusters $V_i$ such that $U_i \not= \emptyset$, it must be that for a white cluster, we have $|S_2\cap U_i| \not= \emptyset$, and similarly, for a black cluster, we have $|S_1\cap U_i| \not= \emptyset$. There are now a few cases:
 \begin{enumerate}
 \item There are no large clusters. In this case, if there is at least one white and one black small cluster, then the vertices from these clusters added to $U'$ are in $S_2$ and $S_1$, respectively.
 Otherwise, assume w.l.o.g.\ that there are no black clusters. Since there are at most $1/\gamma$ gray clusters in total, $|S_1\cap U|\le 1/\gamma\cdot s^2$, contradicting our assumption that $\min\{|S_1\cap U|,|S_2\cap U|\}\ge 2s^2/\gamma$ for large enough $C$. 
 \item There are large clusters, but all of them are white or light gray. Let $V_i$ be a large white or light gray cluster. Since we select $s+1$ vertices of $U_i$, and $|S_1\cap U_i|=\min\{|S_1\cap U_i|,|S_2\cap U_i|\}\le s$, we must select at least one vertex not in $S_1$. Therefore, $S_2\cap U'\ne\emptyset$. If there is at least one black cluster, then the selected vertex in there is in $U'$, so $S_1\cap U'\ne\emptyset$ too, and we are done.
 
 So, assume that there is no black cluster. Since all large clusters are light gray (or white), $|S_1\cap U_i| \le s$ for all large clusters $V_i$. Moreover, by definition of small clusters, $|S_1\cap U_i| \leq |U_i| \le 1/s^2$ for all small clusters $V_i$. Since there are at most $1/\gamma$ gray clusters by \Cref{claim:small_number_cut},
\begin{align*}
|S_1\cap U| = \sum_{i: V_i\text{ small}}|S_1\cap U_i| + \sum_{i: V_i\text{ large}}|S_1\cap U_i| \le \frac1\gamma\cdot s^2 + \frac1\gamma\cdot s < \frac{2s^2}\gamma ,
\end{align*}
a contradiction.
\item There are large clusters, but all of them are black or dark gray. Symmetric case to (2) with $S_1$ replaced with $S_2$.
\item There is at least one black or dark gray large cluster $V_i$, and at least one white or light gray large cluster $V_j$. In this case, since we select $s+1$ vertices of $U_i$ and $|S_2\cap U_i|=\min\{|S_1\cap U_i|,|S_2\cap U_i||\}\le s$, we must select at least one vertex in $S_1$. Similarly, we must select at least one vertex in $U_j$ that is in $S_2$.\qedhere
 \end{enumerate}
\end{proof}

Since $s,1/\gamma\le\text{polylog}(n)$, we can set $C$ large enough in the statement of \cref{theorem:sparsification} so that $C\log^Cn\ge2s^2/\gamma$. This completes the proof of the balanced case for \cref{theorem:sparsification}.

\end{document}